\newtheorem{theorem}{Theorem}[section]
\newtheorem{lemma}[theorem]{Lemma}
\definecolor{LightCyan}{gray}{0.9}
\definecolor{mGreen}{rgb}{0,0.6,0}
\definecolor{mRed}{rgb}{1.,0.0,0.0}
\definecolor{mPurple}{rgb}{0.58,0,0.82}
\definecolor{mBlue}{rgb}{0,0.0,1}
\definecolor{backgroundColour}{rgb}{0.95,0.95,0.92}
\lstdefinestyle{CStyle}{
	commentstyle=\color{mGreen},
	keywordstyle=\color{magenta},
	stringstyle=\color{mPurple},
	basicstyle=\ttfamily\small,	
	frame=tlbr,
	framesep=0.2cm, 
	framerule=0pt,
	breakatwhitespace=false,         
	breaklines=true,                 
	captionpos=b,                    
	keepspaces=true,                 
	showspaces=false,                
	showstringspaces=false,
	showtabs=false,                  
	tabsize=2,
	aboveskip=0pt,
    belowskip=0pt,
	language=C,
    literate={0}{{\textcolor{mPurple}{0}}}{1}%
         {1}{{\textcolor{mPurple}{1}}}{1}%
         {2}{{\textcolor{mPurple}{2}}}{1}%
         {3}{{\textcolor{mPurple}{3}}}{1}%
         {4}{{\textcolor{mPurple}{4}}}{1}%
         {5}{{\textcolor{mPurple}{5}}}{1}%
         {6}{{\textcolor{mPurple}{6}}}{1}%
         {7}{{\textcolor{mPurple}{7}}}{1}%
         {8}{{\textcolor{mPurple}{8}}}{1}%
         {9}{{\textcolor{mPurple}{9}}}{1}%
         {.0}{{\textcolor{mPurple}{.0}}}{2}%
         {.1}{{\textcolor{mPurple}{.1}}}{2}%
         {.2}{{\textcolor{mPurple}{.2}}}{2}%
         {.3}{{\textcolor{mPurple}{.3}}}{2}%
         {.4}{{\textcolor{mPurple}{.4}}}{2}%
         {.5}{{\textcolor{mPurple}{.5}}}{2}%
         {.6}{{\textcolor{mPurple}{.6}}}{2}%
         {.7}{{\textcolor{mPurple}{.7}}}{2}%
         {.8}{{\textcolor{mPurple}{.8}}}{2}%
         {.9}{{\textcolor{mPurple}{.9}}}{2}%
}
\lstdefinestyle{TeeSpecStyle}{
	commentstyle=\color{mGreen},
	keywordstyle=\color{magenta},
	stringstyle=\color{mPurple},
	basicstyle=\ttfamily\small,	
	frame=tlbr,
	framesep=0.2cm, 
	framerule=0pt,
	breakatwhitespace=false,         
	breaklines=true,                 
	captionpos=b,                    
	keepspaces=true,                 
	showspaces=false,                
	showstringspaces=false,
	showtabs=false,                  
	tabsize=2,	
	language=Python,
	aboveskip=0pt,
    belowskip=0pt,
	morekeywords={Map, relation, returns, action, fuzz, atomic, await, char, uint64\_t, size\_t, off\_t, string, safety, requires, extern, call},
    literate={0}{{\textcolor{mPurple}{0}}}{1}%
             {1}{{\textcolor{mPurple}{1}}}{1}%
             {2}{{\textcolor{mPurple}{2}}}{1}%
             {3}{{\textcolor{mPurple}{3}}}{1}%
             {4}{{\textcolor{mPurple}{4}}}{1}%
             {5}{{\textcolor{mPurple}{5}}}{1}%
             {6}{{\textcolor{mPurple}{6}}}{1}%
             {7}{{\textcolor{mPurple}{7}}}{1}%
             {8}{{\textcolor{mPurple}{8}}}{1}%
             {9}{{\textcolor{mPurple}{9}}}{1}%
             {.0}{{\textcolor{mPurple}{.0}}}{2}%
             {.1}{{\textcolor{mPurple}{.1}}}{2}%
             {.2}{{\textcolor{mPurple}{.2}}}{2}%
             {.3}{{\textcolor{mPurple}{.3}}}{2}%
             {.4}{{\textcolor{mPurple}{.4}}}{2}%
             {.5}{{\textcolor{mPurple}{.5}}}{2}%
             {.6}{{\textcolor{mPurple}{.6}}}{2}%
             {.7}{{\textcolor{mPurple}{.7}}}{2}%
             {.8}{{\textcolor{mPurple}{.8}}}{2}%
             {.9}{{\textcolor{mPurple}{.9}}}{2}%
}
\lstdefinestyle{grammar}{
	basicstyle=\ttfamily,	
	framesep=0.2cm,
	framerule=0pt,
	breakatwhitespace=false,         
	breaklines=true,                 
	captionpos=b,                    
	keepspaces=true,                 	
	showspaces=false,                
	showstringspaces=false,
	showtabs=false,    
	tabsize=2	
}
\newif\ifdisablenotes \disablenotestrue
\newcommand{\note}[2]{\textcolor{#2}{\textit{[#1]}}}
 \renewcommand{\note}[2]{}
\renewcommand{\paragraph}[1]{\vspace{0.02in}\noindent{\bf #1.}}
\lstdefinestyle{inline}{basicstyle={\sffamily\small\spaceskip=0.3em},columns=fullflexible}
\newcommand{\id}[1]{\lstinline[style=inline]+#1+}
\newcommand{\TeeSpec}{GKSpec\xspace}
\newcommand{\projname}{GateKeeper\xspace}
\newcommand{\BackendFuzz}{GKVulnChk\xspace} 
\newcommand{\BackendMock}{GKMock\xspace} 
\newcommand{\BackendVerify}{GKValidator\xspace} 
\newcommand{\GrapheneUs}{GK-Graphene\xspace} 
\newcommand{\GrapheneVanilla}{Vanilla-Graphene\xspace} 
\newcommand{\specification}{model\xspace}
\newcommand{\specifications}{models\xspace}
\newcommand{\Specification}{Model\xspace}
\begin{document}
\date{}

\title{Securing Access to Untrusted Services From TEEs with GateKeeper}

\author{
{\rm Meni Orenbach}\\
Technion
\and
{\rm Bar Raveh}\\
Technion
\and
{\rm Alon Berkenstadt}\\
Technion 
\and
{\rm Yan Michalevsky}\\
Anjuna Security
\and
{\rm Shachar Itzhaky}\\
Technion
\and
{\rm Mark Silberstein}\\
Technion
} 

\maketitle
\sloppy

\begin{abstract}
Applications running in Trusted Execution Environments (TEEs) commonly use untrusted external services such as host File System.
Adversaries 
may maliciously alter the normal service behavior to trigger subtle application bugs that would have never occurred under correct service operation, 
causing data leaks and integrity violations.
Unfortunately, existing manual protections are incomplete and ad-hoc, whereas
formally-verified ones require special expertise. 

We introduce \textbf{\projname}, a  framework to develop mitigations and vulnerability checkers for such attacks
by leveraging lightweight formal models of untrusted services.
With the attack seen as a violation of a services' functional correctness, 
\projname takes a novel approach to develop a comprehensive model of a service
without requiring formal methods expertise. We harness 
available testing suites routinely used in service development
to \emph{tighten} the model to known correct service implementation.
\projname uses the 
resulting model to automatically \emph{generate} (1) a correct-by-construction runtime service validator in C that is linked with a trusted application and guards each service invocation to conform to the model; 
and (2) a targeted model-driven vulnerability checker for analyzing black-box applications.

We evaluate \projname on  Intel SGX enclaves. 
We develop comprehensive \specifications of a POSIX file system and OS synchronization primitives while using thousands of existing test suites to tighten their models to the actual Linux implementations. 
We generate the validator and integrate it with Graphene-SGX, and successfully protect unmodified Memcached and SQLite with negligible overheads. 
The generated vulnerability checker detects novel vulnerabilities in the Graphene-SGX protection layer and production applications. 
\end{abstract}

\section{Introduction}
\label{sec:introduction}
Trusted Execution Environments (TEE) are available in
several widely used CPUs from Intel, AMD, RISCV, and ARM~\cite{sgx:hasp13:mckeen, sgx:hasp13:anati,keystone:eurosys20,  amd_sev_snp:white_paper,keystone:eurosys20,komodo:sosp17,cca:white_paper} and are supported by public 
clouds ~\cite{markruss:azureconfidentialcomputing:2017,ibm:sgx:blog,alibabacloud:ebm:sgx}.
They protect the confidentiality and integrity of hosted code
and data, treating the rest of the system including privileged software as untrusted. 

Often, TEE applications use external untrusted services. For example, it is common to run unmodified applications in Intel SGX enclaves with the help of library OSs (libOSs), such as Haven, SCONE, and Graphene-SGX~\cite{haven:osdi14,scone:osdi16,sgxlkl,graphenesgx:usenix17}, which enable seamless invocation of untrusted OS services. Similarly, trusted virtual machines in AMD SEV and Intel TDX use services provided by an untrusted hypervisor~\cite{amd_sev_snp:white_paper,tdx:white_paper}.

The use of untrusted services inevitably creates security risks for TEE applications.
While data confidentiality and integrity protection are broadly deployed, a less obvious vector of attacks is via an untrusted interface: an attacker may deliberately manipulate arguments, return values, and semantics of the service to trigger application-related bugs that inadvertently cause data leakage or control-flow violations~\cite{iago:asplos13, besfs:usenixsec20, emilia:ndss21, tale_of_two_worlds:ccs19, via:arxiv}. For example, a futex may violate the mutual exclusion property~\cite{asyncshock:esorics16}, a file system may mix up file descriptors (\S\ref{sec:motivation:attacks}), or a virtual device may return an unexpected error code~\cite{via:arxiv}. We adopt the term \emph{Iago attacks}~\cite{iago:asplos13,tale_of_two_worlds:ccs19} to refer to all such interface attacks.

Defending against Iago attacks is an open challenge. In general, the mitigation entails validating the \emph{correctness} of all possible outcomes of each service invocation -- an increasingly difficult and error-prone task for complex stateful interfaces such as the POSIX file system (FS) API. Unfortunately, many production applications turned out to be vulnerable~\cite{emilia:ndss21}.  

To narrow the interface to an external service 
and make it easier to secure, one common solution is to reimplement part of the service inside the TEE. For example, SGX-LKL includes an in-enclave FS implementation that calls into the untrusted OS only to store and retrieve raw data blocks~\cite{sgxlkl}. 
Unfortunately, even a narrower block-based interface was found to be vulnerable~\cite{via:arxiv,tale_of_two_worlds:ccs19}.
Closest to our approach, the BesFS project develops an FS model and formally proves a set of correctness invariants on it. The model is used to validate that the underlying untrusted FS does not mount Iago attacks~\cite{besfs:usenixsec20}. However, its development requires extensive expertise in formal methods and  manual translation of the model into C to make it usable, which is harder to scale. 

{\bf \projname} provides a general and comprehensive solution to Iago-attack mitigation and vulnerability testing that is easy to use and deploy for a variety of services, and without assuming formal verification skills. Our target users are (1) service providers seeking to enable safe and secure access to their services from TEE programs; (2) TEE developers seeking to secure access to existing untrusted services; (3) TEE security analysts testing existing applications resilience to attacks.  

Inspired by the recent success of lightweight formal methods~\cite{lightweight_formal:sosp21}, 
we set to build our solution around an abstract service \emph{reference model}, 
which omits implementation details and represents only the visible functional behavior.

Given a model, \projname  \emph{automatically} generates a {trusted validator layer (\BackendVerify)} in C which can be readily integrated into an application or TEE runtime to securely invoke the service.  \BackendVerify embeds an executable model and performs runtime checks to ensure conformance to the expected service behavior expressed by the model. While the executable model is similar to a reference implementation used for property checking~\cite{lightweight_formal:sosp21}, it differs in that it invokes an untrusted service and validates its conformance to the model, \emph{without} implementing the service itself, which is particularly important when validating service \emph{behavior} (as we show in the case of mutex in~\S\ref{sec:motivation:model_protection}). 
Moreover, checking that the service output is the same as that of the model's is not enough, because multiple correct outputs could be allowed, as in the case of different POSIX-compliant FS implementations. \BackendVerify correctly handles these cases.

The same model is used to generate a targeted vulnerability checker (\BackendFuzz), which can serve to drive effective fuzzing sessions or be integrated into existing Iago vulnerability fuzzers~\cite{emilia:ndss21}. \BackendFuzz leverages the service model to check for deep vulnerabilities by automatically generating malicious values violating the model constraints tailored to the current state of the model (\S\ref{sec:overview:fuzz}).

The fundamental question remains: how to build a complete and correct functional model of a service? 
We address this issue by introducing a novel approach that specifically targets the prevention of Iago attacks for \emph{existing} services. 
We observe that to be effective against Iago attacks, the model must be \emph{tight} to a service high-level specification (i.e., POSIX for an FS) or an existing uncompromised widely-used implementation, e.g., ext4. We refer to such implementations as \emph{etalon implementations}.
Thus, we enable model development by using both the etalon implementation and existing regression and conformance testing  originally used for service development. This is particularly applicable to OS and hypervisor services which often provide extensive high-coverage testing suites. 

Our method works as follows. Given an initial version of a model, we test the generated \BackendVerify layer on top of the etalon implementation with the testing suites (this is done outside the TEE). 
Any failure implies that the model is over-constrained. On the other hand, we introduce a tool to \emph{automatically generate \BackendMock}, an executable mock of the service. 
\BackendMock is checked for correctness using the same testing suits, spotting additional bugs. Both \BackendMock and \BackendVerify can be regenerated effortlessly and tested again, until no new bugs are discovered. 
This method also makes it easy to adjust the model to the changes in the service API.  

Our approach does not aim to replace formal verification of a model, rather it is complementary. In fact, we made initial steps toward model verification by proving  safety of the mutex model by manually translating it into mypyvy~\cite{mypyvy:cav19}. However, even models with proven key safety properties do not guarantee complete model correctness~\cite{spec_guard:oospla17}. Thus, we believe that our approach can be useful to raise the confidence in the model correctness with modest development efforts, under a reasonable assumption of the availability of high-coverage service tests.

In summary, our contributions are: 

\noindent \textbf{Model and compiler(\S\ref{sec:design:lang})}. We introduce a simple C-like Domain Specific Language (DSL) for model development, which offers the primitives to specify the model, untrusted calls, and their conformance checks, including details needed for generating \BackendVerify, \BackendMock and \BackendFuzz by a compiler we develop (2,770 LOC).

\noindent \textbf{Models for an FS
and synchronization APIs  (\S\ref{sec:models})}. 
We develop the models of two complex OS interfaces while covering most of their APIs sufficient to run commodity applications. The models are much smaller than the full interface implementations: only 1,226 lines for the FS model and 300 lines for the futex and mutex models. Notably, writing the model is  intuitive: the mutex and futex models took roughly a month for a single undergraduate student.  

\noindent \textbf{Streamlined model development  (\S\ref{sec:overview})}. 
We validate that the models are tight to the POSIX specification and the respective Linux implementations, by generating  their executable mocks~(\S\ref{sec:design:spec_validation}). In particular, the generated FS mock is equivalent to a {\tt tmpfs} in-memory file system when mounted via the FUSE library. We test the FS model by running a high-coverage SibylFS POSIX conformance suite which includes over twenty thousand tests~\cite{sibylfs:sosp15}, and the synchronization model using an LTP~\cite{ltp} Linux stress-testing suites, 
obtaining \emph{error-free} execution in these experiments.

\noindent \textbf{Iago protection for FS and synchronization APIs in SGX enclaves (\S\ref{sec:models})}.
We use the models to protect real applications running in Intel SGX enclaves. We generate \BackendVerify and integrate it with Graphene-SGX~\cite{graphenesgx:usenix17}. 
\BackendVerify intercepts \emph{all} calls to the respective OS services, and 
validates conformance to the models. 
We execute two large real-world applications: Memcached~\cite{memcached}, and SQLite~\cite{sqlite},
and observe the added assertions has negligible performance overheads.

\noindent \textbf{Targeted vulnerability checker (\S\ref{sec:eval:fuzzing})}. We test for Iago vulnerabilities in existing applications and libOSs by generating \BackendFuzz from the FS model.
We find one new vulnerability in Graphene-SGX, which we responsibly disclosed (\BackendVerify successfully detects this vulnerability). Additionally, we test large applications such as Redis~\cite{redis} and detect
more vulnerabilities than the Emilia fuzzer~\cite{emilia:ndss21}.  
\section{Background and threat model}
\label{sec:background}

\paragraph{Trusted Execution Environments (TEEs)}
TEEs offer an isolated execution environment protected from privileged software adversaries.
Two types of TEEs exist: \emph{secure enclaves} that protect a part of processes' address space~\cite{sgx:hasp13:mckeen, amazon:nitro-enclaves, keystone:eurosys20, sanctum:usenix_sec16},
and \emph{trusted virtual machines} (VMs) such as AMD SEV,  Intel TDX, and ARM CCA~\cite{amd_sev_snp:white_paper,tdx:white_paper, arm_cca} that protect entire VMs with their OS.
The hardware protects a TEE context's control-flow integrity and manages TEE's entry, exit, and exceptions, thereby shielding the TEE from a strong adversary.

\paragraph{Enclaves}
Hardware-based enclaves are supported in many cloud platforms~\cite{markruss:azureconfidentialcomputing:2017,ibm:sgx:blog,alibabacloud:ebm:sgx}.
Moreover, even trusted-VM TEEs are also used as an enclave to reduce their TCB~\cite{enarx:github}. As Intel SGX enclaves are the most mature technology, we use them as the target platform for prototyping our work.

\paragraph{Untrusted interface attacks} 
\citeauthor{iago:asplos13}~\cite{iago:asplos13} 
demonstrated that an untrusted OS can perform attacks that may break
the application's control flow integrity. They coined the term \emph{Iago attacks}.  In brief,
Iago attacks return maliciously crafted values instead of valid results. 
Later, \citeauthor{tale_of_two_worlds:ccs19}~\cite{tale_of_two_worlds:ccs19}
generalized Iago attacks to calls to an untrusted interface from SGX enclaves. In this paper, we generalize this notion further by referring to any TEE (not just SGX) and permitting the adversary to perform arbitrary  modifications to the service behavior. 

\noindent \textbf{TEE runtimes.}
One common trait among enclave technologies is the difficulty in executing legacy applications because they cannot directly invoke untrusted external functions, such as system calls. Further, some 
OS services require in-enclave runtime support (e.g., {\tt fork()} in SGX). LibOSs alleviate this problem~\cite{scone:osdi16,graphenesgx:usenix17,haven:osdi14,sgxlkl, panoply:ndss17}. 
They place the entire application and its library dependencies in the enclave 
and serve system calls by implementing them internally or forwarding them to the OS, in which case they may be targeted  by Iago attacks. 
LibOSs usually protect I/O-related system calls using encryption and integrity authentication tags. This allows applications to use external OS services, such as a host FS (e.g., in Keystone~\cite{keystone:eurosys20}, Komodo~\cite{komodo:sosp17}, and Amazon Nitro~\cite{amazon:nitro-enclaves}) with the data protection implemented by the libOS. 
SDK-based enclave applications aim to minimize the TCB by breaking existing applications 
into trusted and untrusted components. This allows control 
over the untrusted interface implementation, which includes the protection mechanism employed.
Finally, in trusted virtual machines the guest VM can secure the data plane similarly
from an untrusted hypervisor.
We collectively refer to these frameworks facilitating TEE usage and protection as TEE runtime.

\paragraph{Threat model}
We focus on the standard TEE threat model~\cite{sgx:hasp13:mckeen,amd_sev_snp:white_paper}, which excludes
side-channel, speculative execution, and denial-of-service (DoS) attacks.
In the context of enclaves we do not consider attacks on the enclave ABI tier, which may exploit incorrect
register values and incorrect sanitization by the TEE runtime, or attacks due to bugs in the trusted code, such as buffer overflows. These attacks are orthogonal and their mitigation is well-understood ~\cite{memory_sok:ieeesp13, tale_of_two_worlds:ccs19}. 
\section{Motivation}
\label{sec:motivation}
TEEs may define different trust boundaries: in enclaves, the OS services are out of the Trusted Computing Base (TCB). In VM-based TEEs the OS is trusted, but the hypervisor services are not. However, many programs rely on services, which are no longer trusted to function correctly. 
For example, to access a host FS from enclaves, or a virtio device from a trusted VM, programs invoke respective system- or hyper-calls. 

Excluding these services from the trust boundary is supposed to improve security by reducing the TCB. However, accessing untrusted services without special care might expose the trusted software to Iago attacks.

Using untrusted services is not unique to TEEs, and was considered in other contexts. For example, programs executing on top of microkernels~\cite{l4re} may also require access to an untrusted FS to run legacy software~\cite{vpfs:eurosys08}. However, we are not aware of systematic solutions to this problem so far.

\subsection{Examples of Iago attacks}
\label{sec:motivation:attacks}

Many examples of Iago attacks have been published in prior work~\cite{besfs:usenixsec20,iago:asplos13,emilia:ndss21}.
Our goal in this section is to show that mitigating them is not trivial.
As an example, we use a compromised FS that attacks enclave applications using it.

\paragraph{Handled by libOS: data tampering} Malicious OS returns incorrect file contents. This attack is easy to mitigate, and most existing advanced libOSs  do so by using well-known secure cryptographic integrity tools~\cite{haven:osdi14, graphenesgx:usenix17, intel:sgxsdk}. 

\paragraph{API attack: incorrect return values}
A malicious FS returns an already existing file descriptor for an \id{open()} call.

\begin{lstlisting}[style=CStyle]
fd1=open("foo1", O_CREAT | O_RDWR, 0644);
// OS returns fd1 value maliciously
fd2=open("foo2", O_CREAT | O_RDWR, 0644); 
// TEE runtime updates auth tag for fd1 at offset 0
write(fd1, w1_buf, 100); 
// TEE runtime updates auth tag for fd1 at offset 100
write(fd2, w1_buf, 100); 
\end{lstlisting}

This is a real attack that exploits a vulnerability in the Graphene-SGX FS protection layer that we find automatically using \BackendFuzz.
The vulnerability stems from the way Graphene-SGX handled data tampering attacks above. Specifically, it maintains a shadow in-TEE state for each opened file descriptor to store data authentication tags for the file contents at block granularity, and updates them on every write. Since the protection layer uses the OS-returned file descriptor value to index the shadow state it results in incorrect storage of the authentication tag if the file descriptor is incorrect. 

\paragraph{API attack: incorrect behavior} Malicious OS creates a new file instead of a link to an existing file. Accessing both the linked and original files would result in inconsistent content.

Data integrity validation is insufficient to protect against API attacks since
the file contents are returned correctly. The attack on the {\tt link} call is particularly difficult to mitigate without validating the file system state is updated \emph{correctly}.

\subsection{Manual protection}
\label{sec:motivation:manual_protection}
Existing mitigation approaches rely on correctness validation 
following the trust-but-verify model~\cite{inktag:asplos13,sego:asplos16}.
In a nutshell, TEEs invoke untrusted interface calls and validate that the response matches the expected service semantics.

\paragraph{Interface complexity}
The difficulty of establishing a secure perimeter for an application and 
supporting access to untrusted services depends on the service semantics complexity.
Taking SGX enclaves and libOSs as an example,
there are several approaches for providing secure access to untrusted OS services, which trade the size of the TCB with the protection complexity. We explain this tradeoff next, using an FS API as a running example.

\paragraph{Complex semantics, smaller TCB}
LibOSs, such as Graphene-SGX~\cite{graphenesgx:usenix17} and
SCONE~\cite{scone:osdi16} forward system calls to an untrusted FS.
In turn, they internally implement a comprehensive 
FS layer that strives to protect against interface attacks. 
Unfortunately, this protective layer is written manually and the effectiveness of the protection is hard to validate.
Our work discovered a vulnerability in the Graphene-SGX protection layer as we mentioned in the example above.

\paragraph{Simple semantics, larger TCB}
Some other libOSs include a partial or complete FS
implementation~\cite{sgxlkl, haven:osdi14, occlum:asplos20}, 
reducing reliance on the untrusted OS.
For example, Haven~\cite{haven:osdi14} and SGX-LKL~\cite{sgxlkl} 
include a complete FS implementation in the libOS, 
which reduces the interface to virtio-blk with simpler semantics and fewer inter-dependencies between operations.  
Unfortunately, a narrower interface still requires a protection layer of its own.
This is  written manually by libOS developers and shares the same validation problem as before.

Furthermore, the inclusion of an FS implementation in the TCB has multiple disadvantages. A larger TCB implies a larger probability of bugs. Further, enclave applications are forced to use a particular FS offered by the libOS, 
instead of any host OS-supported FS.
Moreover, the deployment of an internal FS complicates integration with the host: users can use neither the existing FS structure 
nor tools such as backup with the \id{rsync} utility. 
Last, the libOS implementation replaces a mature and continuously maintained FS in the OS,
such that security patches to it arrive late.

These examples show the insufficiency of manual protection approaches. 

\begin{figure}[t]
    \centering
	\includegraphics[width=.80\columnwidth]{./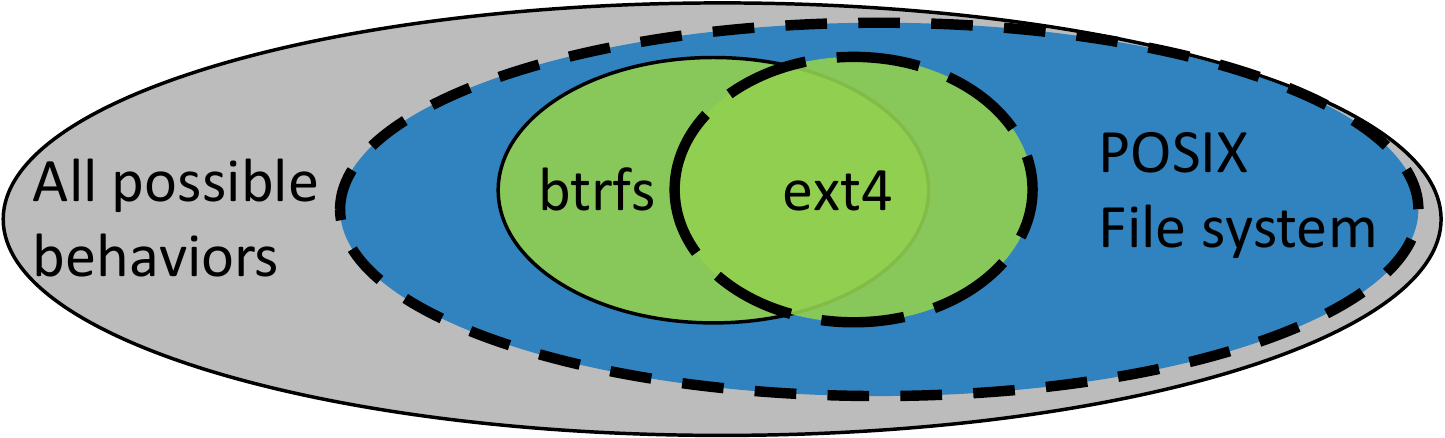}
	\caption{Two possible FS models to protect against Iago attacks (dashed outline): one tighten to the  POSIX spec ( multiple valid behaviors) and another to the ext4 implementation.}
	\label{fig:model_correctness}
\end{figure}

\subsection{Model-based protection to the rescue}
\label{sec:motivation:model_protection}
Recent successes in applying formal methods to developing provably-correct operating systems and services~\cite{sel4:sosp09,besfs:usenixsec20,yggdrasil:osdi16,hyperkernel:sosp17,lightweight_formal:sosp21} motivate us to consider a principled approach to mitigation of Iago attacks that a service functional correctness model can drive.

However, we found it challenging to apply these approaches to achieve our goals. 
The push-button verification methods used to develop an OS~\cite{hyperkernel:sosp17} and an FS~\cite{yggdrasil:osdi16} had to modify certain aspects of the modeled services (i.e., via finitization) to allow proof completion. In contrast, we aim to model existing unmodified services.

BesFS~\cite{besfs:usenixsec20} developed a Coq-based model to prove correctness invariants but required manual translation into C to integrate the FS implementation into an SGX enclave.

Fundamentally, these approaches require extensive expertise in formal methods to develop and prove, whereas our goal is to make model development accessible to non-experts.

We share this sentiment with the recent work on the application of \emph{lightweight} formal methods to check a complex application~\cite{lightweight_formal:sosp21}. They develop a simple reference executable model (mock) of a service to check compliance with a full-service implementation ~\cite{lightweight_formal:sosp21} by running them side-by-side. However, we cannot apply this approach as is. First, it checks for exact equivalence of outputs with the reference model, and thus cannot be used to represent services that can produce multiple valid results (i.e., different FSs that conform to POSIX specification, or file {\tt read} that returns a different number of bytes). In addition, it is not clear how to check the blocking \emph{behavior} of calls such as \id{mutex\_lock()} with the help of the reference model. Last, this approach does not allow us to leverage the model to generate a vulnerability checker. 

\paragraph{Challenge: correct model}
To be useful for mitigating Iago attacks, a model should correctly represent a concrete reference implementation or  high-level service specification (see Figure~\ref{fig:model_correctness}
for illustration). However, proving model correctness is an open problem~\cite{spec_guard:oospla17}. An attacker can leverage a model bug to compromise the system. 
Formally proving key model correctness properties  indeed increases the confidence in the model correctness, yet is challenging for non-experts.

In \projname, we seek to develop a complementary approach to create service \specifications, validate they faithfully reflect the correct service implementation using existing tests, and reuse them to generate both a validator and a vulnerability checker to mitigate Iago attacks systematically. 
 
\section{Approach overview}
\label{sec:overview}
We use two model examples to demonstrate our approach: a mutex, and a simplified FS with a single directory that is empty at the start, files have no attributes other than their size and permissions, and there are no links. We  use a simplified {\tt read()} call to demonstrate the main concepts.  

\subsection{Service model}
\label{sec:overview:model_dev}
In \projname, developers write models in the \TeeSpec DSL. 
Each model encapsulates the \emph{user-visible state} of the untrusted interface and a set of operations that manipulate this  state to reflect
the expected behavior of the modeled interface.  

\paragraph{FS abstract state}
In our simplified FS, the state includes three \emph{abstract maps}. 
The first map holds the file size and the contents\footnote{For simplicity, the file data is stored in the model. In practice, we keep only data authentication tags}  indexed by a unique identifier (ino) akin to an inode. 
The other two maps translate paths or open file descriptors to ino values and the cursor.

\begin{lstlisting}[style = TeeSpecStyle]
Map ino_state(ino: int) returns(sz:int, data:char[])
Map fs_state(path:string) returns(ino:int); 
Map fd_state(fd:int) returns(off:off_t, ino:int); 
\end{lstlisting}

\paragraph{FS operations}
For each of the FS API calls, the model includes a respective \emph{action}. An action resembles regular program code and specifies the state updates performed by the call. 
The action specifies constraints  on the call's semantics  using the abstract states via \id{requires} statements. This is an intuitive way to specify conformance of the untrusted call to the model. The {\tt read()} action sketch is as follows:

\begin{lstlisting}[style = TeeSpecStyle,caption={{\tt read} call model.},captionpos=b, label={lst:read_spec}]

action read(fd: int, buf: void[], cnt: size_t) 
       returns (nread: ssize_t) := {
 # valid file descriptor?
 if (fd_state(fd) == NULL) return -EBADF;
 # untrusted interface call
 nread := extern call untrusted_os_read(fd,buf,cnt);
 requires (nread >= 0 and nread <= cnt); 
 off:off_t := fd_state(fd).off;
 ino:int := fd_state(fd).ino;
 # cannot read past end of file
 requires ((cnt >= ino_state(ino).sz - off) -> 
  nread <= (ino_state(ino).sz - off));
 requires (ino_state(ino).data[off:off+nread] ==
  buf[0:nread]);
 # update cursor
 fd_state(fd).off := fd_state(fd).off + nread;
 return nread;
}
\end{lstlisting}

A \lstinline!return! before the untrusted interface call is a shorthand
to invoking the call and checking that it returned this value.
However, this structure is more succinct, as it does not require
checking for individual errors and allows specifying preconditions
along with corresponding error codes for when they are violated.

\paragraph{Mutex model}
Modeling mutex is different since operational behavior 
must be specified additionally to the return value.
The abstract mutex state is represented by an abstract map holding
a counter representing the mutex is locked (>0) or unlocked (0)
indexed by a mutex identifier. 

\begin{lstlisting}[style = TeeSpecStyle]
Map mutex_state(id:int) returns(counter:int);
\end{lstlisting}

The action corresponding to {\tt mutex\_lock} is as follows:

\begin{lstlisting}[style = TeeSpecStyle, caption={{\tt mutex\_lock} call model.},captionpos=b, label={lst:mutex_spec}]
action mutex_lock(id: int) returns (res: void) := {
 extern call untrusted_os_lock(id);
 atomic (mutex_state(id)) { 
  await requires (mutex_state(id).counter == 0);
  mutex_state(id).counter:=mutex_state(id).counter+1;
 };}
\end{lstlisting}

It invokes the untrusted mutex call first, and then \emph{atomically} checks (thanks to the keyword {\tt atomic})  that the respective mutex is indeed unlocked according to the model state, and turns it into locked. Indeed, the model can successfully identify the case when the untrusted mutex call would not block even though the mutex is locked. The \emph{await} keyword is used to allow correct generation of \emph{blocking calls} by  
\BackendMock (\S\ref{sec:design:mock}). We explain correctness in \S\ref{sec:models:sync_primitives}.
 
\paragraph{Summary: models}
While these examples are simple they demonstrate the core concepts of the models used in \projname.
First, models are free of implementation details (i.e., the waiting queue for mutex). 
Second, the models are both succinct and expressive. Modeling \id{mutex\_lock()} 
only requires a few lines while capturing important semantic behaviors of the call: 
it is a blocking call, and abstract state checks and updates must happen atomically with respect to other operations performed on the same abstract mutex object.

\begin{figure}[t]
    \centering
	\includegraphics[width=.95\columnwidth]{./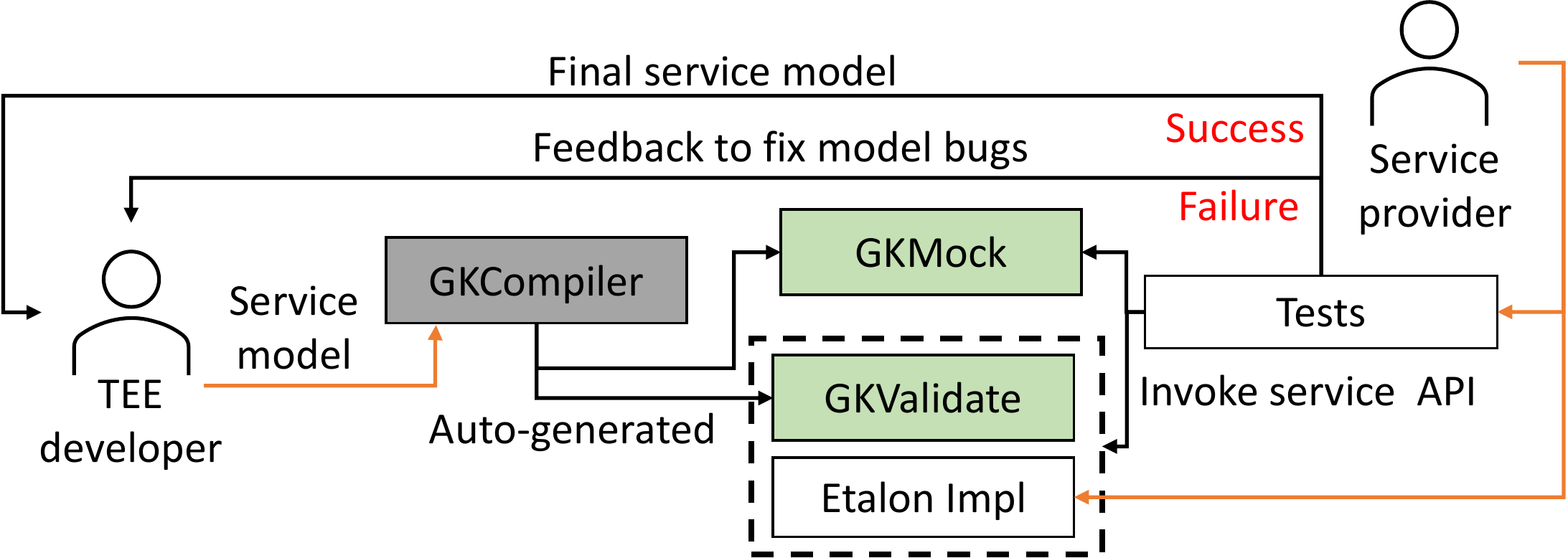}
	\caption{Model development flow.}
	\label{fig:prot_debug}
\end{figure}

\subsection{From model to \BackendVerify}
The model cannot be used as is to protect TEEs. 
TEE developers use \projname's compiler to auto-generate
a \emph{correct-by-construction (w.r.t. the model)} \BackendVerify module. \BackendVerify is a C library that integrates  with TEE runtimes to validate the service's conformance to the model (see Figure~\ref{fig:prot_flow}, left). 
\BackendVerify tracks \emph{concrete values} used by TEE programs, invokes untrusted calls, and performs runtime assertions based on the model's constraints in \id{requires} statements.

Consider the example in-enclave function in Listing~\ref{lst:conf_test} (we also use it in another context, hence the name). \id{create}, \id{write} and \id{read} functions invoke the \BackendVerify calls which internally call and validate the respective system calls. In this example, the model state is updated as follows; the {\tt fs\_state} map will contain an entry ``foo'' with a unique \id{ino}. \id{ino\_state} map in \id{ino} will store the first $\mathit{nw}$ bytes from the \id{w\_buf} as the file's content. After the untrusted system call invocation, runtime asserts are invoked, e.g.,  to validate that the file contents returned
by the untrusted call match the expected contents in \id{inode\_state} map.

\begin{lstlisting}[style=CStyle,emph={conformance_read},emphstyle=\color{mBlue},caption={Sample conformance test.},captionpos=b, label={lst:conf_test}]
void conformance_read() {
 fd = create("foo", S_IRUSR | S_IWUSR);
 assert (fd >= 0);
 char w_buf[100], r_buf[100];
 memset(w_buf, 0xff, 100);
 int nw = write(fd, w_buf, 100);
 lseek(fd, 0, SEEK_SET);
 int nr = read(fd, r_buf, nw);
 assert(!memcmp(r_buf, w_buf, nr));
}
\end{lstlisting}

\paragraph{Model initialization}
In our example, we consider an empty FS with no files. 
But if there were (i.e., ``bar'' in the listing below), \projname allows specifying initial concrete state that would be compiled into \BackendVerify, linked with the TEE
runtime, and included in its attestation report~\cite{costan:sgxexplained:cryptoeprint16}. 

\begin{lstlisting}[style = TeeSpecStyle]
# Example: FS contains a single empty file "bar"
init { 
 fs_state("bar").ino := 0;
 ino_state(0).sz := 0; 
}
\end{lstlisting}

\begin{figure}[t]
    \centering
	\includegraphics[width=.95\columnwidth]{./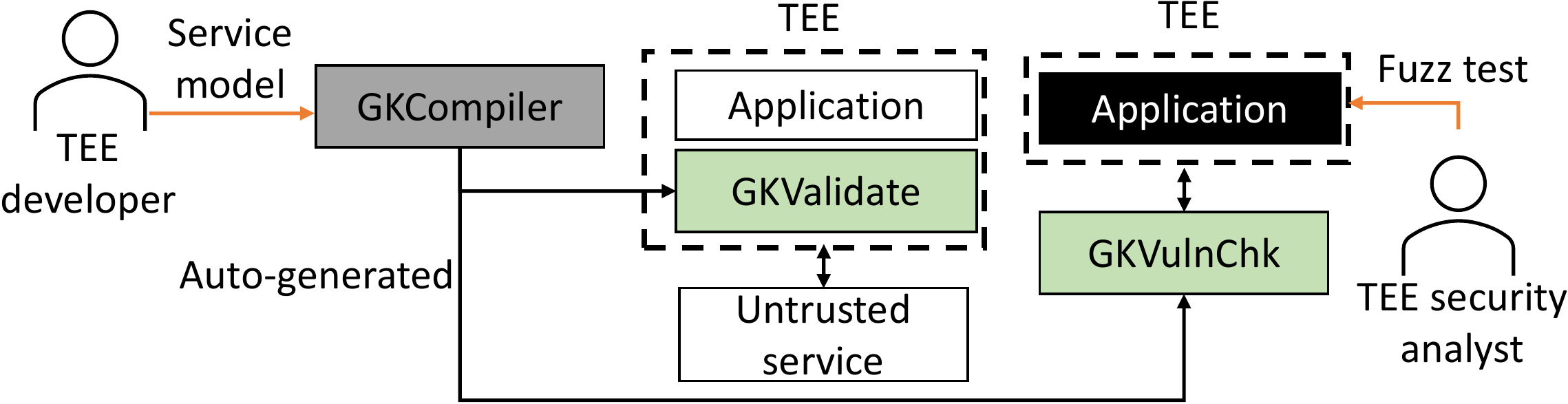}
	\caption{Model-driven TEE protection and fuzzing in the production stage.}
	\label{fig:prot_flow}
\end{figure}

\subsection{Model debugging and tightening}
\label{sec:overview:model_debug}
\projname facilitates iterative model debugging and tightening. A model developer runs the tests used in the original service development, fixes the model based on the outcome, and reruns the tests until an error-free test execution (Fig~\ref{fig:prot_debug}).

To enable this process, \projname compiler uses the model to generate \BackendMock, which is a functional executable implementation of the service (also called a reference implementation in~\cite{lightweight_formal:sosp21}).  \BackendMock for the FS model is a functional in-memory FS.
Unlike the validator, the mock does not invoke the real service. Instead, it encodes the \id{requires} constraints to
a Satisfiability Modulo Theories (SMT) formula, and uses an SMT solver to compute the return value that satisfies it. It chooses a random value if multiple are permitted, or generates an error when none is found.

\projname model debugging approach relies on service test-suites. The tests are invoked both on the etalon implementation through the \BackendVerify, and on the mock (Figure~\ref{fig:prot_debug}).
This process strives to achieve a tight approximation of the model to the service (assuming high coverage of the available service tests), which is highlighted in the following Lemmas.

\begin{lemma}
If a test invoked on the \BackendVerify-on-etalon combination 
fails then the model is over-restrictive. 
\label{lemma:validator}
\end{lemma}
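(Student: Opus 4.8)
The plan is to prove this direct implication by tracing any test failure back to a single violated model constraint and then arguing that such a violation can only arise if the model has been tightened beyond the true specification. First I would fix the three standing assumptions that give the statement content: (i) the etalon implementation is uncompromised, so every response it emits during a test run is a \emph{valid} service behavior with respect to the high-level specification the model is meant to capture; (ii) the test in question belongs to the service's own regression/conformance suite, so the raw etalon, with no validator interposed, passes it; and (iii) \BackendVerify is behaviorally transparent modulo its assertions -- on each intercepted call it forwards the concrete arguments to the etalon, reads back the etalon's concrete return value, and the only observable effect it adds is the evaluation of the runtime assertions compiled from the \id{requires} clauses (such as those in Listing~\ref{lst:read_spec}).

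Next I would establish the key structural fact: the \BackendVerify-on-etalon execution and the raw-etalon execution of the same test produce identical observable traces \emph{unless} some compiled \id{requires} assertion fires. This follows from assumption (iii): absent a firing assertion, every value the test observes is exactly the value the etalon would have returned on its own. Combined with (ii), no assertion firing therefore implies the test passes. Taking the contrapositive, a test \emph{failure} implies that at least one \id{requires} assertion fired on the concrete arguments and the concrete etalon return value encountered during the run.

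I would then close the argument by unfolding the definition of over-restrictiveness. A fired assertion means the model's constraint evaluated to false on a concrete call/response pair $(\mathit{args}, \mathit{ret})$ that the etalon actually produced. By assumption (i) this pair is a legitimate behavior of the real service, yet the model rejects it. A model that rejects a behavior the correct service is permitted to exhibit is, by definition, over-restrictive, i.e.\ constrained strictly inside the true specification (the inner dashed region in Figure~\ref{fig:model_correctness}). Hence the model is over-restrictive, which is the claim.

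The main obstacle I anticipate is justifying assumption (iii) rigorously rather than informally: one must argue that \BackendVerify never alters the concrete return value handed back to the caller and never mutates shared or untrusted state in a way that would perturb the etalon's subsequent responses; otherwise a test could fail for reasons unrelated to an over-restrictive constraint (for instance, a code-generation bug). Pinning this down amounts to a small correctness argument about the compiler's code-generation strategy, and it is precisely where the ``correct-by-construction'' property claimed for \BackendVerify carries the weight. A secondary subtlety is the treatment of calls admitting multiple valid outputs: here it matters that on the etalon path the concrete value is supplied by the etalon rather than synthesized by an SMT solver as in \BackendMock, so each assertion is checked against a genuine service response and the reduction to over-restrictiveness goes through without having to reason about which of several permitted outputs was chosen.
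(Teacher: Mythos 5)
Your proof is correct and takes essentially the same route as the paper's: the paper argues the contrapositive (if the model is not over-restrictive, no added assertion can fire, so the test passes), which is exactly the core of your argument, stated directly. Your version merely makes explicit the assumptions the paper leaves implicit -- that the etalon passes its own test suite and that \BackendVerify is transparent apart from its compiled assertions.
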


\begin{proof}
Assume the model is not over-restrictive, there are no added runtime assertions that would cause the test to fail. 
Therefore, the test should pass. 
\end{proof}

For example, assume that in the \id{read()} model (Listing~\ref{lst:read_spec}), we replace the constraint {\tt nread <= cnt} with {\tt nread < cnt}.
As a result, the test in Listing~\ref{lst:conf_test} would fail.

\begin{lemma}
If a test invoked on a non-over-restrictive \BackendMock fails, then the model is over-permissive.
\end{lemma}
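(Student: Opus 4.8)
The plan is to mirror the structure of the proof of Lemma~\ref{lemma:validator}, exploiting the duality between the two backends. The first step is to unpack the two technical terms in the statement. A model is \emph{over-permissive} when it admits a behavior (a return value or state transition) that the real service would never exhibit; dually, a model is \emph{non-over-restrictive} when every behavior the etalon implementation produces is accepted by the model's \id{requires} constraints. The key observation is that \BackendMock is defined to \emph{produce} exactly the return values permitted by the \id{requires} constraints: it encodes those constraints as an SMT formula and returns any satisfying value. So the range of behaviors the mock can exhibit coincides, by construction, with the set of behaviors the model permits.

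The proof I would give is a short contrapositive argument. First I would assume the model is \emph{not} over-permissive, i.e., every behavior admitted by the \id{requires} constraints is in fact a legal service behavior. A test suite, by assumption, asserts only properties that hold of correct service behavior; a test failure means some observed return value or state violated one of these assertions, hence was not a legal service behavior. But since the mock is non-over-restrictive, every legal service behavior is among the values the mock is \emph{allowed} to return, and since the model is assumed non-over-permissive, every value the mock \emph{can} return is a legal service behavior. Therefore every value the mock returns passes the test assertions, so the test passes. Contrapositively, if the test fails the model must be over-permissive. This is the exact mirror image of the earlier lemma, with the roles of ``too many constraints'' and ``too few constraints'' exchanged and the roles of the two backends exchanged.

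The step I expect to be the main obstacle — and the one the authors' own proof will have to be careful about — is the handling of \emph{nondeterminism} in the mock. Because the mock chooses a random satisfying value when several are permitted, a single test run explores only one point of the permitted behavior set, whereas over-permissiveness is a statement about the \emph{existence} of an illegal permitted behavior. A clean proof therefore needs the non-over-restrictive hypothesis to pin down the relationship precisely: the failing run witnesses that \emph{some} value the mock was free to choose lies outside the legal-behavior set, which is already enough to conclude the model permits an illegal behavior. I would make this explicit by phrasing the argument in terms of the \emph{set} of mock-reachable behaviors rather than a single execution, so that one failing test certifies a nonempty intersection between ``permitted by the model'' and ``rejected by the test oracle,'' and hence over-permissiveness.

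One caveat worth flagging is the implicit soundness assumption on the test oracle, namely that the tests never reject a genuinely legal behavior (otherwise a mock failure could be blamed on a bad test rather than the model). This mirrors the tacit assumption in Lemma~\ref{lemma:validator} that the etalon implementation is itself correct, so I would either state it as a standing assumption for both lemmas or note that it is inherited from the surrounding setup described in \S\ref{sec:overview:model_debug}.
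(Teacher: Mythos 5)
Your proof is correct and takes essentially the same route as the paper's: assume the model is not over-permissive (together with non-over-restrictiveness), note that the mock's outputs then satisfy all constraints and are legal service behaviors, so the test assertions hold, and conclude by contraposition. Your extra point about nondeterminism is handled by the paper outside the proof (it runs each test on \BackendMock multiple times to explore the permitted-output set), so it is a reasonable caveat but not a divergence in approach.
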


\begin{proof}
Assume the model is not over-permissive and not over-restrictive, 
the mock-returned values must satisfy all the available constraints.
A conformance test should not have an invalid assertion and the test should pass. 
\end{proof}

For example, assume that we drop the constraint on the number of bytes read from the \id{read()} model. 
Thus, the test in Listing~\ref{lst:conf_test} would fail when the 
mock would return $1000$ bytes, which satisfies a vacuous constraint.

When running tests on \BackendMock we execute each test multiple times,
to test the model behavior when multiple possible outputs are permitted.

\subsection{From model to \BackendFuzz}
\label{sec:overview:fuzz}
We use a model to generate \BackendFuzz to enable targeted Iago-vulnerability checking of TEE programs that cannot be modified.  \BackendFuzz is a C library to be integrated into a fuzzer (see Figure~\ref{fig:prot_flow}, right).
Internally, \BackendFuzz generates a set of 
malicious values that \emph{violate} the model constraints 
\emph{tailored to the tested program}.

For example, consider testing the program in Listing~\ref{lst:conf_test}. 
When invoked with \BackendFuzz instead of real service, the {\tt read} call  may return incorrect file contents or an incorrect number of bytes read.
This approach allows targeted and more effective fuzzing. 
However, we allow minimizing the search space even further with the model 
hints (\S\ref{sec:design:fuzzer}). 
 
\section{Design}
\label{sec:design}
\paragraph{Model language}
\label{sec:design:lang}
We settle on representing service in an abstract model rather than developing a simplified reference implementation~\cite{lightweight_formal:sosp21} to simplify the derivation of \BackendVerify,\BackendVerify and \BackendMock. 
Prior work used either language intended for modeling such as Alloy~\cite{alloy,P:msr,spin:ieeese}, or a subset of general-purpose languages such as Python and Rust to write models as they are used for the implementation as well~\cite{yggdrasil:osdi16, hyperkernel:sosp17, lightweight_formal:sosp21}. We introduce a custom DSL: \TeeSpec, with a syntax 
that is close to C as seen in examples in Section~\ref{sec:overview}. This is a pragmatic choice, 
as we find it easier to implement a compiler for \TeeSpec instead of a Python compiler.

\paragraph{\TeeSpec details}
Types in \TeeSpec closely follow the C language types. 
We remove pointers, however. Instead, arrays and nested
arrays are supported as in C with explicit element type. 
We add the \id{string} type to distinguish strings from character arrays.
The operators resemble C, except for small changes such as range comparison and assignment for array types.
In addition, \TeeSpec introduces 
quantifier logic operators within the \id{requires} statements, 
abstract maps, \id{atomic}, \id{await}, \id{fuzz}, and 
external calls into C functions.
They are parsed by the compiler to generate all the artifacts as explained next.

\paragraph{Model compiler}
The compiler generates C code from input models.
This choice is due to three reasons. 
First, it enables simple
integration with existing systems, as
we show with Graphene-SGX (\S\ref{sec:evaluation}),
FUSE~\cite{fuse} for mounting a mock FS, and 
existing test suites, e.g., the Linux Test Project (LTP)~\cite{ltp}. 
Second, TEE developers can use familiar tools such as profilers, debuggers, and compilers with mature optimizations, which reduces 
the validator's performance impact.
Third, \TeeSpec similarity to C simplifies the compiler.
In addition, \projname includes a trusted runtime library with 
data structures and functions invoked by the compiler as described next.

\paragraph{Code generation}
Generating \BackendVerify, \BackendMock, and \BackendFuzz from a model
has common building blocks.
The compiler parses the model, assigns a type to each variable, and checks for type safety. Code generation for C-like statements 
is straightforward. We focus on the new DSL statements next.
First, abstract maps are generated using hashtables: 
the compiler emits code for memory management such that
entries are allocated before inserting them into the hash table and deleted when they are freed.
Assignments and validations using abstract maps in the model are 
translated to GET and SET operation on the hash table entry. 
Primitive types assignment is handled with the C assignment operator.
Complex types rely on memory copying functions, 
e.g, buffer assignment via \id{memmove}.
External calls are a simple invocation of existing functions, and
the \id{init} keyword generates a function with unique name 
for the concrete \specification initial state. 
This function is intended to be invoked at the application's startup.

\paragraph{Atomicity}
To support fine-grained atomic transactions via the \id{atomic} keyword the compiler
emits a spinlock in each hash table definition.
Spinlocks are values stored in memory and used
to implement user-space spinlocks that do not rely on untrusted OSs.
These spinlocks cannot be used in user programs instead of OS futexes or mutexes, however. 
The reason is that they lack much of the functionality of futexes, 
including fairness. Instead, they are intended to protect only a few memory assignments in the model itself.

\paragraph{Correct-by-construction}
Assuming the compiler correctly parses the model
and has no bugs, the 
executables are correct-by-construction w.r.t the model.
Specifically, \BackendVerify asserts the correct semantics of the untrusted service;
\BackendMock emulates the untrusted service's semantics; \BackendFuzz returns only values that violate the untrusted service's semantics.

\paragraph{\BackendVerify}
For \BackendVerify, the compiler generates a runtime assertion per \id{requires} statement.
The predicate logic expressions are generated via the respected C operators.
Notably, universal and existential quantifiers are generated using loops. Similarly, higher-order logic 
generates nested loops.

\paragraph{\BackendMock}
\label{sec:design:mock}
For \BackendMock,  
the compiler aggregates scoped constraints for every variable,
encodes them to an SMT formula and invokes z3 SMT solver~\cite{z3}.
The solver outputs values that must satisfy all the constraints, which in turn 
are assigned to the corresponding variables.

The services behavior is not limited to updating values.
For example, an attempt to lock an already locked mutex  
cannot return until the same corresponding mutex is unlocked.
The keyword \id{await} marks blocking calls in the model
and expresses the constraints for which they 
wait (c.f. \id{mutex\_lock()} in~\S\ref{sec:overview}). 
To simulate it, the compiler generates busy-wait loops waiting for the condition to become true.
In the case of \id{mutex\_lock()}, \BackendMock waits for another thread to unlock: 
change the lock state, which clears the condition.

\paragraph{\BackendFuzz}
\label{sec:design:fuzzer}
\BackendFuzz is generated similarly to \BackendMock, except 
the constraints are negated before passing them to the SMT solver.
The compiler generates a configurable number of solutions by calling the SMT solver
with added constraints for the previously generated value at subsequent calls.
Unfortunately, this approach can still result in a large number 
of candidate malicious values. 
For example, the negation of the \id{read} constraints (Listing~\ref{lst:read_spec}) is:
\begin{lstlisting}[style = TeeSpecStyle]
nread <0 or nread > cnt or 
cnt>=ino_state(ino).sz -> nread>ino_state(ino).sz-off
\end{lstlisting}

We overcome this by introducing hints to \TeeSpec 
that direct the value generation in \BackendVerify to those 
that are more likely to trigger bugs. These are only soft hints, and they do not preclude the generation of all permitted values.
For example, the \id{read} call in the FS \specification chooses from valid Linux error codes first, 
or prefers large numbers that may result in buffer overflows if used to access memory.

\begin{lstlisting}[style = TeeSpecStyle]
fuzz { requires ( nread>=-131 or nread > (1<<30)); }
\end{lstlisting}

\paragraph{Model debugging and refinement}
\label{sec:design:spec_validation}
The compiler may be configured to output a trace file of invoked untrusted calls and their parameters.
We find this useful when running the validator and model implementation with the test suites.
Effectively, the trace and the failed test source code acts as a counter-example, similar to
a verifier based on SMT solvers, allowing 
to debug invalid constraints in the model.

\paragraph{Implementation}
The compiler prototype is written in PLY v3.10 
and consists of 2,770 LOC. 
\BackendFuzz and \BackendMock use the z3 v4.8.10 SMT solver~\cite{z3}.
The FS and synchronization primitives \specifications consist of 1,226
and 300 lines, respectively. 
Our trusted library consists of 2,177 lines, including the uthash~\cite{uthash} 
hash table.

\section{Securing untrusted interfaces in TEEs}

Applying our approach in TEEs poses two additional requirements: trusted initial state, and coordinated state changes. At the same time, not all external services used in TEEs can and should be protected. We explain these below. 

\paragraph{Trusted initial state}
When a TEE is started, a service usually already has some initial state that must be reflected in the model state as well. This initial state however must be trusted. 

Usually, TEEs are invoked with an initial state that is embedded with the trusted binary and thus attested at TEE invocation time. For example, Graphene-SGX uses a manifest containing trusted files and their expected paths on the host.

\TeeSpec includes special constructs for specifying the initial state. We require the TEE developer to generate these constructs from the TEE's trusted state, and then compile them with \BackendVerify, as part of the integration of the validator with the TEE runtime. 

\paragraph{Coordinated model state changes}
An internal state of an untrusted service may be modified as a result of interaction with other entities, or due to the internal service logic. 
For example, an FS may service multiple processes concurrently.

This scenario would trigger an assertion by \BackendVerify because the model state would diverge from the actual service state. While seems to be a problem, this is \emph{exactly} the behavior expected of the correct trusted system if such  service state changes were \emph{not coordinated} with the TEE. Indeed, this situation is indistinguishable from an Iago attack. 

Thus, to support services whose state can be modified outside the current TEE, the changes must be coordinated explicitly with \BackendVerify. This coordination request may originate from a trusted party, or otherwise, the request should be rejected. For example, to share a file among two enclaves they must establish a trusted communication channel (i.e, via TLS connection to a remote enclave~\cite{tls:rfc}, or via local attestation~\cite{costan:sgxexplained:cryptoeprint16}), through which they can coordinate file accesses. 

\paragraph{Trusted hardware services}
TEE technologies offer a variety of trusted hardware primitives, 
which therefore do not require additional protection.
For example, if TEE hardware provides a trusted page table, then the OS services using it would not be vulnerable to Iago attacks from the OS virtual memory subsystem. In the case of Intel SGX, for example,  process creation and virtual memory management are two cases that are validated by SGX hardware. \projname is flexible to model and validate them, but it is not necessary.

\paragraph{\BackendVerify security guarantees}
\BackendVerify protects against Iago attacks that may modify the behavior of calls to untrusted services invoked from trusted code~\cite{tale_of_two_worlds:ccs19}.
Specifically, it protects against illegal control-flow modifications of a bug-free, trusted code that can circumvent its integrity or confidentiality. 
If the service behavior is modified by the adversary but it conforms to the permitted behavior of correct service, an application must handle such cases correctly; otherwise, it is considered an application bug.

In practice, \projname cannot protect against corruption of trusted data if an  attack does not compromise a \emph{running application}. For example, a malicious OS may change the behavior of a \id{write()} call such that it  corrupts the file contents; \projname would not be able to detect this change unless the file contents are later read. In other words, \projname may not be able to catch the invocation of the maliciously modified service itself. However, it guarantees that it will prevent the attempts to use the results of such a call by TEE software.

\paragraph{Services that cannot be protected}
\projname cannot be used to secure access to \emph{externally modified untrusted state}, e.g., time sources, because the change to the state is not coordinated. 
Further, both hardware and privileged software may deny the service altogether. For example, the OS scheduler cannot be validated for SGX because the OS is in control of scheduling and may deny cycles to enclaves. The same is true for a storage drive that does not persist data. These DoS are out of scope.

\section{Experience with protecting SGX enclaves}
\label{sec:models}

To evaluate the utility of \projname we seek to protect untrusted FS API, futex OS call, and pthread mutex API in SGX enclaves. These are three complex services with non-trivial interfaces and are broadly used in trusted applications. 

We develop the respective models following the proposed iterative development approach (\S\ref{sec:overview:model_debug}) and generate their respective 
\BackendVerify, and \BackendFuzz. We integrate \BackendVerify into Graphene-SGX to protect
enclave applications, and use \BackendFuzz to find Iago vulnerabilities in Graphene-SGX and production applications.

\subsection{File system}
We model a rich FS with support for
directories, hard and soft links, data, metadata, and permissions.

\paragraph{\Specification states}
We use four groups of abstract maps: 
process-related, path-related, in-use files/directories handles, and inode-related. Process maps are used to model global metadata used 
by the FS operations, e.g., user and groups for checking permissions,
and the current working directory for path resolution.
Path-related maps are indexed by a \emph{canonical path} 
and maps to the corresponding inode.  
Note, in our \specification the inode does not represent stored disk blocks locations.
Instead, we refer to inode as an identifier that acts as a key to the inode abstract map, which in turn contains data and metadata on the corresponding directory, or link (depending on the file type), and metadata for files. We discuss the file data below. Our \specification
stores the data for different file types in different fields in the abstract map to simplify the model.
Finally, the in-use handles map contains the unique inode identifier.
For directories, we also maintain a list of visited entries to model \id{getdents} correctly by placing constraints that each entry is returned  exactly once to the user.

\paragraph{File content protection}
Our \specification contains a configurable encryption and authentication
mechanism for file contents by using external calls to 
formally-verified cryptographic implementations~\cite{evercrypt:ieeesp20}
and storing authentication tags in the {\tt inode\_state} map at block granularity. Writes update the tags and reads validate them.
This is similar to TEE runtimes that provide FS 
shields~\cite{scone:osdi16,graphenesgx:usenix17}
However, this check is redundant if the TEE runtime 
already validates the contents, so we can turn it off in the model. This is possible because we separately model the state for regular files, directories, and symbolic links. 

\paragraph{\Specification operations}
We model the core FS system calls that are sufficient for generating a mock that allows mounting it over FUSE~\cite{fuse} and passing a rigorous POSIX conformance suite (\S\ref{sec:eval:validation}).
The complete \specification in \TeeSpec is 1,226 lines, and contains the following FS 
operations:
\id{open}, 
\id{close}, 
\id{read}, 
\id{pread}, 
\id{write}, 
\id{pwrite}, 
\id{mkdir}, 
\id{chdir}, 
\id{getcwd}, 
\id{lseek}, 
\id{unlink}, 
\id{lstat}, 
\id{fstat}, 
\id{access}, 
\id{getdents}, 
\id{readlink}, 
\id{rmdir}, 
\id{symlink}, 
\id{rename}, 
\id{link}, 
\id{fchmod}, 
\id{chmod}, 
\id{lchown}, 
\id{ftruncate}, 
\id{truncate}.

\paragraph{Canonical path representation}
Using file paths as keys to our FS map
can be ambiguous, i.e.,  "f.txt" refers to the same file as "spec/../f.txt". 
We use canonical path representation as the key in the path abstract map.
To simplify the \specification we implement the path resolution 
function in C and add it to the trusted runtime library. 
This function retrieves the directory states from the \specification abstract maps. 

\paragraph{Limitations}
We do not model asynchronous I/O, signals, internal memory allocations, or read-only FS. We also exclude resource exhaustion (inodes, memory) and respective errors  as they constitute DoS attacks.  Finally, our FS model must run in a single thread, the limitation shared by prior works on formal modeling of FSs~\cite{besfs:usenixsec20, yggdrasil:osdi16, sibylfs:sosp15}. These limitations do not preclude running large real-world applications (\S\ref{sec:eval:validator_overhreads}).

\subsection{Synchronization primitives}
\label{sec:models:sync_primitives}
We model three types of mutexes,
\id{normal}, \id{errcheck}, and \id{recursive},
and futex \id{wait} and \id{wake} operations. 
These synchronization primitives are broadly used by multithreaded applications running in enclaves, both small (with Intel SDK~\cite{intel:sgxsdk}) and large (with a libOS). 
Graphene-SGX uses the futex API to implement higher-level synchronization primitives. 

\paragraph{\Specification states}
We use mutex and futex abstract maps with the respective identifier as a key that maps
to metadata values, e.g., counter that is used to model
a recursive mutex.

\paragraph{Mutex operations}
The {\tt mutex\_lock} implementation (Listing~\ref{lst:mutex_spec}) does not enclose the call to the untrusted mutex into the \id{atomic} clause. Thus, {\tt mutex\_lock}, and the state access is not atomic with respect to each other. This might seem to lead to state divergence and a potential TOCTOU attack.
A strawman approach to include the untrusted call into the atomic block would cause a deadlock.

We observe, however, that the atomicity of the model state check and the untrusted call invocation is unnecessary.
For \id{mutex\_lock},  the states are checked and updated after the invocation 
of the untrusted call, whereas for \id{mutex\_unlock} the checks and updates 
are performed before the call. In both cases, the \specification validates that the
untrusted call indeed succeeds. 
As a result, even if the untrusted mutex is compromised, an attempt to
lock it twice would fail on validation assertion (rather than block). 
Further, even if the unlocking thread is preempted after the 
state is updated but before the untrusted mutex unlock is invoked, the calling thread already left the critical section protected by the mutex, and will resume from the same point.

\paragraph{Futex operations}
The original futex API is as follows:  
\id{futex\_wait} puts the current thread to sleep 
and \id{futex\_wake} wakes one or several sleeping threads.
The first argument is a pointer and is used as a unique futex identifier. 
The pointed value acts as a condition to block or not. 
Thus, the pointed address must be accessible to the OS (cannot be stored 
in the TEE's memory).
A malicious OS can modify the memory contents, without the TEE being notified.
This problem precludes secure updates and validation of the futex states.  

We introduced new calls to the futex API:
\id{futex\_init}, \id{futex\_destroy}, and \id{futex\_cmpxchg}.
This allows us to maintain a shadow state in the TEE's memory and atomically
update it together with the untrusted variable used for the futex. 

The \specification tracks sleeping and waking threads to validate that the 
OS cannot the too large number of woken up threads. If the OS does not wake up these threads, it is equivalent to a DoS attack.

\paragraph{Model verification attempts}
\label{sec:overview:formal}
In an attempt to improve the model correctness guarantees, we experimented with formal verification  manually translating the model to mypyvy~\cite{mypyvy:cav19,ivy:pldi16}, which 
was previously used to prove complex transition system protocols such as Raft~\cite{raft:usenixatc14}.

As an initial step, we translated our mutex \specification to prove that it guarantees mutual exclusion.
The translation was straightforward and took a single day.
We encode the mutual exclusion property in mypyvy 
as an invariant that each mutex is held by a single or no thread
at any time. 

\begin{lstlisting}[style = TeeSpecStyle]
safety forall m in mutex_state :: mutex_state(m).counter == 1 || mutex_state(m).counter == 0;
\end{lstlisting}

These results are promising and encourage us to pursue 
verification of \projname
\specifications in the future.

\paragraph{Limitations}
We do not model timeouts (SGX has no trusted time source) or multi-process support. 
Non-default mutex types are supported but not other non-default attributes, 
such as protocol, sharing, and robustness.
For futex, we provide a partial \specification that supports the most frequently used features, 
\id{wait} and \id{wake}. 

\subsection{Model development and deployment}

\paragraph{Development effort}
The development of \projname was done in tandem with 
the FS \specification and took a roughly one-person year.
The futex and mutex \specifications took a month by an undergraduate student.

The \specification debugging approach helped discover numerous bugs,
which the student was able to fix using the feedback provided by the traces of the failed tests.
This positive experience indicates that non-experts in formal methods can efficiently develop services \specifications.

We describe two bugs discovered in an early prototype.

\paragraph{Over-restriction bug}
We placed an incorrect constraint asserting that 
the buffer size in the \id{readlink} call should always match the
size of the symbolic link in the corresponding LTS state.
However, the \id{readlink} semantics \emph{permit} such a case and
expect the partial target to be copied to the buffer. 
This was detected by running the respective tests with the validator on top of the ext4 FS.

\paragraph{Over-permission bug}
We omitted the \specification of a constraint on the returned number 
of links for the \id{fstat} system call.
When running the test suites, we noticed failures. Further analysis  revealed that the mock did not generate a 
the correct number of links because of this missing constraint.

\paragraph{Mock testing}
We use several approaches. To run conformance tests on the FS,  we connect it as a backend to FUSE and mount it regularly.
To test the mutex-mock  we use LD\_PRELOAD when running the tests. Futex tests required manual integration. 

\paragraph{Integrating validator with SGX}
We integrate the \BackendVerify into Graphene-SGX v1.1~\cite{graphenesgx:usenix17} to provide systematic protection when it accesses untrusted FS services and futexes.
We modify 274 LOC.
We place the generated validator in trusted code replacing the original Graphene-SGX logic that invokes untrusted calls. 
Finally, the validator code and the initial model state are compiled into a single enclave executable and attested together.

\paragraph{Vulnerability checker}
We use two approaches to test for vulnerabilities using \BackendFuzz.
First, to fuzz existing applications we use \BackendFuzz with Emilia~\cite{emilia:ndss21},
a fuzzer that intercepts system calls and detects crashes due to invalid memory accesses. \BackendFuzz augments Emilia's original value generator.
Second, to fuzz the Graphene-SGX protection layer we replace system call invocation
made by the trusted code with calls that return maliciously generated values. 
We do not use Emilia in this case, because its system call interception mechanism  is based on \id{strace}, making the fuzzing too slow, and because it may detect Iago bugs in untrusted code that are unrelated to the TEE. 
\section{Evaluation}
\label{sec:evaluation}

\paragraph{Setup}
\label{eval:env_setup}
We evaluate \projname on 
a server with Intel Skylake i7-6700 
4-core CPU with 8~MB LLC, 16 GB RAM,
Ubuntu Linux 18.04 64-bit, Linux kernel v4.15.0-135
and Intel SGX driver v2.10~\cite{intel:sgx_driver}. 
We run each experiment 10 times and report the mean execution time.
The standard deviation is below 5\%.

\subsection{\Specification validation}
\label{sec:eval:validation}
For all the tests we compile the \BackendMock and \BackendVerify
with address sanitizers~\cite{address_sanitizer:atc12} 
to ensure that the compiler-generated code
is free of memory vulnerabilities.

For the FS model, we use the SibylFS POSIX conformance test suite, with 21,068 tests that achieve 98\% coverage of SibylFS's POSIX model~\cite{sibylfs:sosp15}. For synchronization primitives, we use LTP~\cite{ltp} stress tests performing millions of mutex operations with 120 threads. 
We also test for conformance using mutex and futex LTP 
tests excluding tests with unsupported features by our \specification (8 out of 14). 

\paragraph{File system}
The SibylFS test suite is designed to work against a real FS. 
We use FUSE~\cite{fuse} for that purpose.

Testing with FUSE poses a challenge. 
The tests use
multiple processes and assume a shared FS,
while our \specification is intended for a single process.
To allow testing with multiple processes, we add an action to our FS model that updates the current process properties (user, groups, cwd, and umask). 
The mounted FS invokes this function to update the requesting process's properties
before issuing any FS operation.

For the validator-FS and mock-fs, all tests terminate successfully, implying that 
To validate compliance, we also analyze the traces
using the SibylFS compliance checker, which rejected 528 out of 21,068 tests.
Manual investigation revealed that none were rejected because of bugs in the 
\specification itself. Two traces were rejected, as they exercised sparse files
not included in our \specification. In another trace, 
the FS reported an empty root folder 
with two links instead of one. However, this is a common behavior, also found in
ext4. 
The rest of the rejected traces revealed bugs in FUSE, 
which did not forward the requests to our
handlers. 
These bugs hide potential bugs in the \specification, but we did not
investigate this further. 

Finally, we found that the same traces are rejected when mounting a
pass-through FS on top of ext4, which increases our confidence that our 
FS \specification is tight to ext4. 

\paragraph{Synchronization primitives}
We run the mutex stress-test from LTP~\cite{ltp}, which generates
high contention on locks, assuring that mutual exclusion is always maintained.
To test our mutex \specification, we intercept mutex operations using \id{LD\_PRELOAD} 
and replace them with calls to the mutex's validator and mock.
To test the futex \specification, we integrate the futex's validator and mock
into Graphene-SGX's mutex implementation, replacing the original futex operations.
Next, we run five mutex and three futex conformance tests.
We select only the tests that use features in our \specification,
e.g., without timeouts.
All tests terminate successfully, revealing no bugs in our \specification.

\begin{figure*}[t]
	\subfloat[FSCQ suite]{
		\includegraphics[width=.33\textwidth]{./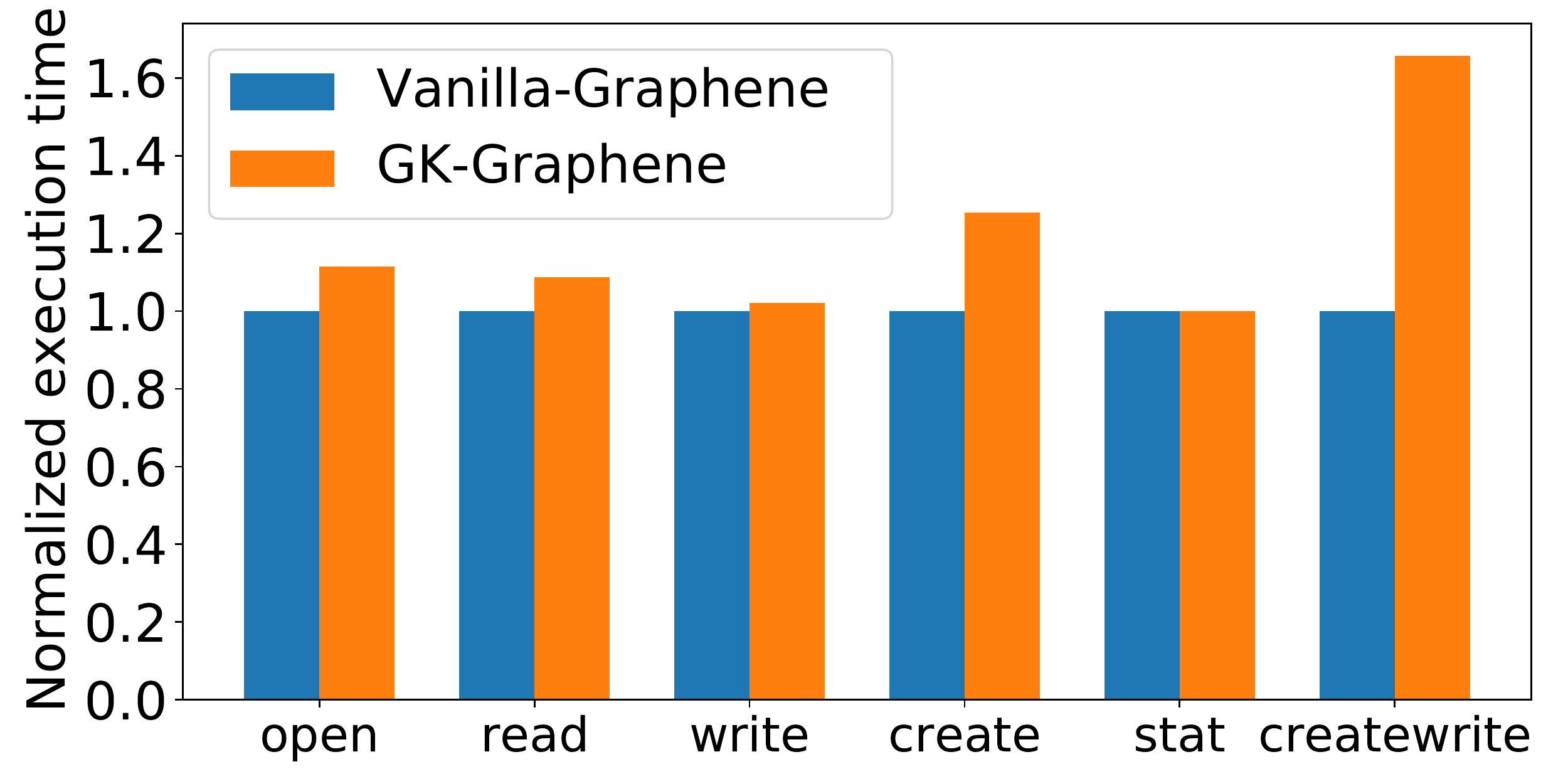}
		\label{fig:fscq}
	}
	\subfloat[FS overheads with SQLite]{
		\includegraphics[width=0.33\textwidth]{./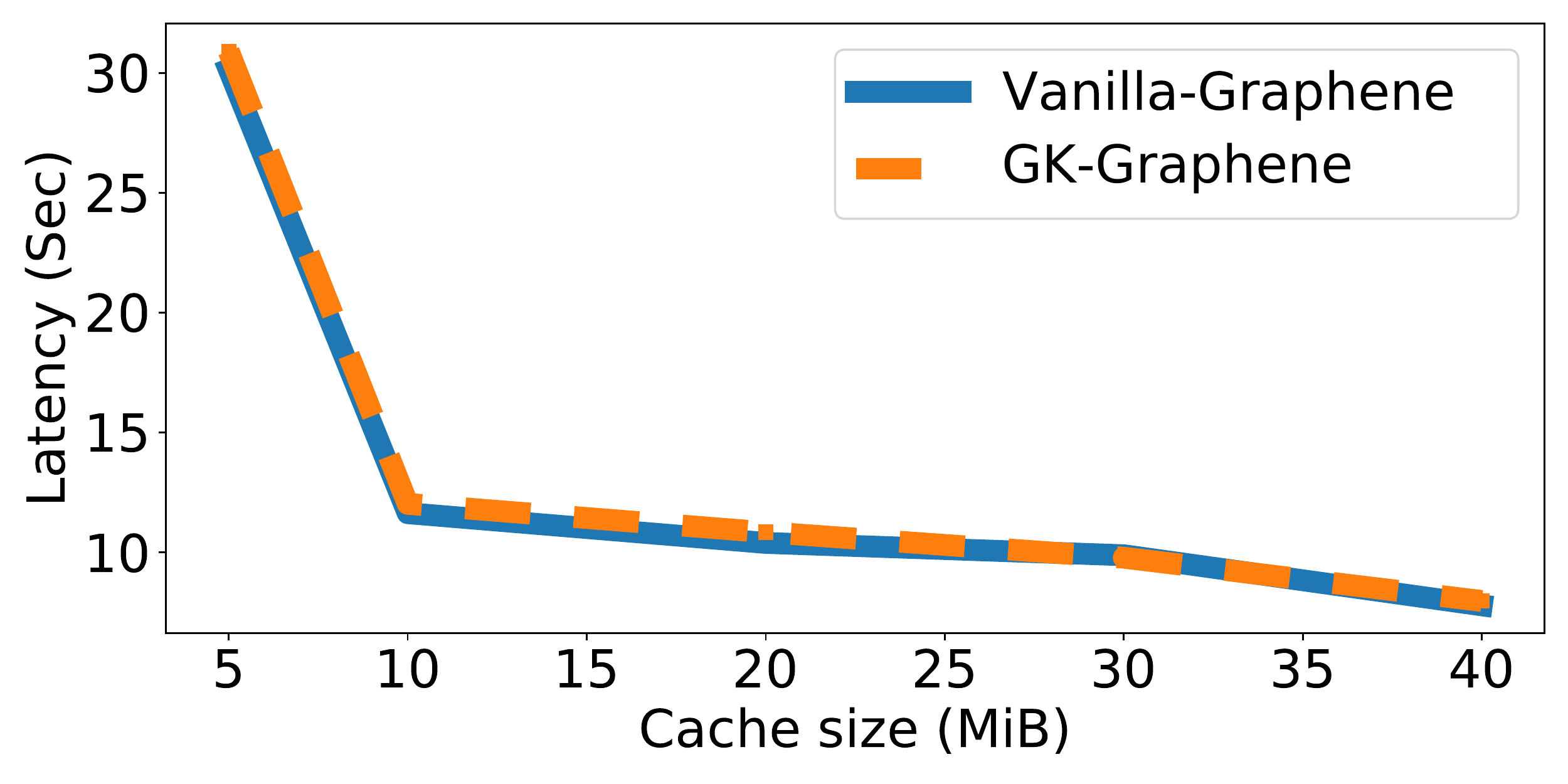}
		\label{fig:sqlite}
	}
	\subfloat[Futex overheads with Memcached]{
		\includegraphics[width=.33\textwidth]{./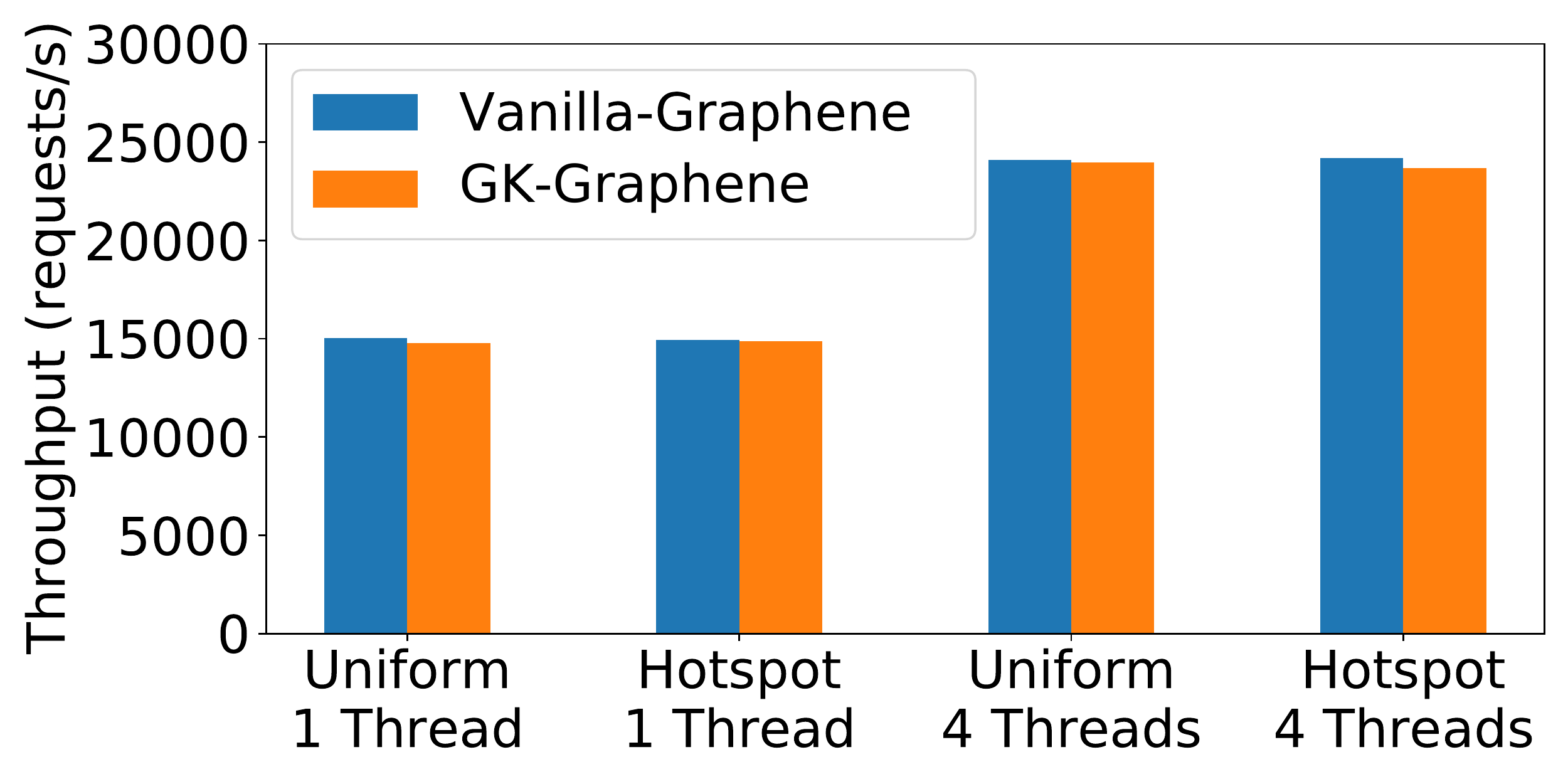}
		\label{fig:memcached}
	}

	\caption{FS and futex overheads: \GrapheneUs vs \GrapheneVanilla.}
\end{figure*}

\ctable[
caption=Fuzzing workloads and vulnerabilities found by \projname
(by Emilia~\cite{emilia:ndss21}),
label=tbl:fuzz-workloads
]{p{.06\columnwidth}p{.15\columnwidth}p{.15\columnwidth}p{.15\columnwidth}p{.20\columnwidth}}{}{ \FL
	& curl & memcached & Redis & zlib 
	\ML
	ver. & 7.58 & 1.5.20 & 5.0.5 & 1.2.11
	\NN
	LOC & 130k & 18k & 115k & 18k
	\NN
	Desc. & Web client & MemKVS & DB KVS & Library
	\NN
	Vuln. & 5 (3) & 4 (4) & 1 (1) & 2 (1)
	\LL
}

\subsection{Vulnerability checking}
\label{sec:eval:fuzzing}
We demonstrate the use of \BackendFuzz to find vulnerabilities in 
applications and libOSs.

\paragraph{Library OS vulnerabilities}
We run our fuzzer against Graphene-SGX using a simple test
we implement, which performs all file operations supported in Graphene-SGX 
and asserts that the results are as expected.
This test is sufficiently simple to allow us to find vulnerabilities in the libOS itself. 

Our fuzzer invokes the test, each time replacing a system call return value or 
output parameter with our generated adversarial values. Overall, we run our fuzzer with
4,000 different malicious values. 
Fortunately, Graphene-SGX successfully protects against most of these attacks. 
However, we find one vulnerability in their protection logic
that results in illegal memory access (responsibly disclosed to developers). 
Such vulnerabilities are known to compromise entire enclaves~\cite{rop_sgx:usenixsec17}.
Finally, we validate that this vulnerability is \emph{correctly detected by our validator} when  
integrated into Graphene-SGX. 

\paragraph{Comparison with Emilia}
We use the dedicated fuzzer Emilia~\cite{emilia:ndss21}
to detect invalid memory accesses in applications due to Iago attacks.
We restrict the fuzzing only to FS-related system calls 
and use applications previously fuzzed by Emilia (the malicious values used by Emilia
are publicly available). We exclude the git application because its run was too long.
Before fuzzing we use \BackendFuzz to generate malicious values tailored to these applications.
Last, we compare fuzzing with Emilia's values with those generated by \BackendFuzz.

We manually validate each crash and filter duplicates.
We report the results in Table~\ref{tbl:fuzz-workloads}. 
\BackendFuzz identifies \emph{a superset} of vulnerabilities as compared 
to Emilia. Specifically, it finds identical vulnerabilities for all the tests, plus 
new ones in zlib and curl that Emilia did not find.

\subsection{Microbenchmarks and end-to-end performance}
\label{sec:eval:validator_overhreads}
Finally, we study the performance overhead of our in-enclave runtime validation.
We integrate both FS and futex validators into Graphene-SGX. 
We disable the validator's internal file-content integrity validation and use Graphene's 
file protection via authenticated encryption in both configurations.
We call  Graphene-SGX with the validator as
\GrapheneUs and to the original version as \GrapheneVanilla.
In the following benchmarks, we execute the 
same tests with the same setup both in \GrapheneVanilla
and \GrapheneUs. 

\paragraph{FS Stress Test}
We use FSCQ to measure performance overheads of common FS operations,
as done in previous work~\cite{besfs:usenixsec20, fscq:atc16}. 
We modify each test in the suite to measure end-to-end execution time.
This puts the focus on the FS operations and excludes initialization time. 
Before each test, we add a setup step to validate that the FS operations succeed. 
Finally, we modify the write microbenchmark to perform sequential writes 
to avoid internal caches in Graphene-SGX, 
which forces all write operations to be forwarded to the OS.

The results are shown in Figure~\ref{fig:fscq}.  
We observe that the overheads of most operations are relatively small.
Specifically, note that with \GrapheneUs
\id{read} and \id{write} incur less than 10\%
overhead compared to \GrapheneVanilla. 
The file creation overhead is higher, about 60\%, as expected because of the complexity of the file path validations. 

\paragraph{Mutex Stress Test}
We use the mutex stress test from LTP to measure the performance
of our \emph{futex} \specification. We configure the test to run for 20
seconds. 
We observe negligible overheads: 4,238,450 and 4,227,715
lock/unlock operations performed in \GrapheneVanilla and \GrapheneUs, respectively.
This is expected: futex validations are simple comparisons and small compared to the
sleep times under high lock contention. 

\paragraph{SQLite: FS overheads}
SQLite is a popular database used for libOS evaluation in prior work~\cite{cosmix:atc19,sgxbounds:eurosys17}. 
We measure end-to-end latency with \id{speedtest1} shipped with SQLite v3.34.1
while varying the cache size. 
We use a single thread and disable mmap-based access to
the DB files.
The results are shown in Figure~\ref{fig:sqlite}.
We observe that \GrapheneUs is on a par with \GrapheneVanilla,
regardless of the cache size. This demonstrates that 
the FS validator overheads are negligible.

\paragraph{Memcached: futex overheads}
Memcached~\cite{memcached} is a popular
key-value store that was used in prior work on enclaves~\cite{eleos:eurosys17,cosmix:atc19,scone:osdi16,glamdring:usenixatc17}. 
We evaluate Memcached
using the YCSB workload generator~\cite{ycsb:socc10} with
the predefined \id{workload C} as in previous work~\cite{glamdring:usenixatc17,autarky:eurosys20}. 
It performs 100\% random GET operations for 1\,KB records. 
To  focus specifically on futex overheads, we evaluate 
Memcached with one or four serving threads, with uniform access and hotspot. In hotspot, we define 5\%
of the entries as a hot set with an access probability of 95\%.
This creates high and low contention on internal mutexes.

We co-locate YCSB and Memcached on the same machine and 
pin each one to a different core to avoid network overheads.
We preload Memcached with 1 MB of data before each test 
to avoid SGX paging overheads~\cite{eleos:eurosys17}. 
We report the maximum throughput achieved.
The results are shown in Figure~\ref{fig:memcached}. Exactly as in the mutex stress tests,
we observe negligible overheads in both high- and low-contention setups. 

\subsection{In-enclave software size comparison}
We measure the code size of in-enclave FS implementation
in SGX-LKL and compare it to the size of our FS \specification,
compiler and trusted library (6,173 lines in total).
Note that these components constitute the FS validator's TCB 
as it is generated by them.
Both protect against Iago attacks. 
As SGX-LKL includes a full Linux kernel, it includes many FS 
implementations. For completeness, we measure
both the FS directory in LKL (887k LOC), and
only for ext4 that SGX-LKL mounts by default (36k LOC).
\projname minimizes the TCB, 
while still protecting against Iago attacks. Yet,
admittedly, our FS model does not support all the ext4 features. 
\section{Related Work}
\label{sec:related}

\paragraph{TEEs}
TEEs~\cite{sgx:hasp13:mckeen,sgx:hasp13:anati, 
amd_sev:white_paper, amd_sev_snp:white_paper,
keystone:eurosys20,tdx:white_paper,cca:white_paper} protects against a
privileged adversary. 
TEEs do not protect against Iago attacks on the
untrusted software interface, which is the focus of 
\projname.

\paragraph{Enclave interface attacks}
\citeauthor{tale_of_two_worlds:ccs19}~\cite{tale_of_two_worlds:ccs19}
provided a comprehensive analysis of enclave interface attacks,
generalizing Iago attacks from system calls to ocalls 
and manually inspecting libOSs code
to find Iago vulnerabilities.
\projname, finds such vulnerabilities 
automatically via \BackendFuzz.

\paragraph{Interface defenses}
TeeRex~\cite{teerex:usenixsec20} detects memory-related vulnerabilities
via symbolic execution. The Intel \id{edger8r} tool generates
ocalls that enforce type safety, and ~\citeauthor{sep_logic}~\cite{sep_logic}
used separation logic to validate that pointers are entirely in/out
of the enclave address space. 
COIN attacks~\cite{coin_attacks:asplos20} further generalize
Iago attacks to the unexpected invocation of enclave functions 
and use symbolic execution to detect such vulnerabilities.
SGXPecial~\cite{sgxpecial:eurosec21} restricts valid control flows across 
the interface to mitigate code reuse exploits.
Unlike these defenses, \projname validates the \emph{complete 
semantics of cross-interface calls} and detects any deviation
from them.

Many runtime systems attempt to mitigate Iago attacks~\cite{
inktag:asplos13, sego:asplos16, graphenesgx:usenix17, civet:usenixsec20,
haven:osdi14, sgxlkl, scone:osdi16, ghostrider:asplos15}.
We focus on Inktag and Sego, as TEE runtimes were already 
discussed (\S\ref{sec:background}). Both 
use the trust-but-verify approach, as in \projname. 
However, they rely on a trusted hypervisor that has
access to devices and can therefore verify the OS behavior 
when accessing a certain file, which facilitates the validation of FS operations.
\projname cannot rely on a trusted hypervisor but 
still shows it is possible to verify OS services from within
the enclave.

\paragraph{Iago fuzzers}
Emilia~\cite{emilia:ndss21} provides a system call fuzzer to
detect Iago vulnerabilities. Unlike Emilia, which 
performs static analysis on programs' source code, \projname 
generates malicious values based on the \specification.

\paragraph{Services \specifications}
Modeling of services for verification
was extensively studied~\cite{besfs:usenixsec20, yggdrasil:osdi16, fscq:atc16, hyperkernel:sosp17, lightweight_formal:sosp21, sel4:sosp09, certikos:osdi16}.
Unlike previous \specifications that were used to verify
service implementations, \projname generates 
validation, mock, and vulnerability checking tools. 
\section{Conclusion}
\label{sec:conclusion}

\projname is a framework that 
facilitates model development without expertise 
in formal methods. The model is used to systematically protect 
trusted code from untrusted services. 
We develop and refine \specifications for FS and synchronization primitives 
and use them to protect Memcached and SQLite executing in SGX enclaves.
We also find vulnerabilities in Graphene-SGX and production applications 
with the FS vulnerability checker. 
We believe that \projname is not limited to TEEs, 
and forms a foundation for protecting general untrusted services,
e.g., for microkernel
and remote untrusted services.

\projname's source code will be publicly available. 

\bibliographystyle{plainnat}
\bibliography{refs}

\begin{thebibliography}{72}
\providecommand{\natexlab}[1]{#1}
\providecommand{\url}[1]{\texttt{#1}}
\expandafter\ifx\csname urlstyle\endcsname\relax
  \providecommand{\doi}[1]{doi: #1}\else
  \providecommand{\doi}{doi: \begingroup \urlstyle{rm}\Url}\fi

\bibitem[l4r()]{l4re}
{L4Re} microkernel system.
\newblock URL \url{https://l4re.org/}.

\bibitem[Ali(2018)]{alibabacloud:ebm:sgx}
\emph{{ECS} Bare Metal Instance}.
\newblock Alibaba Cloud, 2018.
\newblock URL \url{https://www.alibabacloud.com/product/ebm}.
\newblock Accessed: 2019-08-08.

\bibitem[Ama(2021)]{amazon:nitro-enclaves}
\emph{Amazon Nitro Enclaves}.
\newblock Amazon, 2021.
\newblock URL \url{https://aws.amazon.com/ec2/nitro/nitro-enclaves/}.

\bibitem[AMD(2020)]{amd_sev_snp:white_paper}
\emph{{AMD SEV-SNP}: Strengthening {VM} isolation with integrity protection and
  more}.
\newblock AMD, January 2020.
\newblock URL
  \url{www.amd.com/system/files/TechDocs/SEV-SNP-strengthening-vm-isolation-with-integrity-protection-and-more.pdf}.

\bibitem[Anati et~al.(2013)Anati, Gueron, Johnson, and
  Scarlata]{sgx:hasp13:anati}
Ittai Anati, Shay Gueron, Simon Johnson, and Vincent Scarlata.
\newblock Innovative technology for {CPU} based attestation and sealing.
\newblock In \emph{2nd International Workshop on Hardware and Architectural
  Support for Security and Privacy}, 2013.

\bibitem[and Gupta et~al.(2012)and Gupta, Jackson, Qadeer, and Rajamani]{P:msr}
Vivek and Gupta, Ethan Jackson, Shaz Qadeer, and Sriram~and Rajamani.
\newblock P: Safe asynchronous event-driven programming.
\newblock Technical Report MSR-TR-2012-116, November 2012.
\newblock URL
  \url{https://www.microsoft.com/en-us/research/publication/p-safe-asynchronous-event-driven-programming/}.

\bibitem[ARM(2021{\natexlab{a}})]{arm_cca}
\emph{ARM CCA}.
\newblock ARM, 2021{\natexlab{a}}.
\newblock URL
  \url{https://developer.arm.com/architectures/architecture-security-features/confidential-computing}.

\bibitem[ARM(2021{\natexlab{b}})]{cca:white_paper}
\emph{{Arm Architecture Reference Manual Supplement, The Realm Management
  Extension (RME), for Armv9-A}}.
\newblock ARM, November 2021{\natexlab{b}}.
\newblock URL \url{https://developer.arm.com/documentation/ddi0615/latest/}.

\bibitem[Arnautov et~al.(2016)Arnautov, Trach, Gregor, Knauth, Martin, Priebe,
  Lind, Muthukumaran, O'Keeffe, Stillwell, Goltzsche, Eyers, Kapitza, Pietzuch,
  and Fetzer]{scone:osdi16}
Sergei Arnautov, Bohdan Trach, Franz Gregor, Thomas Knauth, Andr{\'{e}} Martin,
  Christian Priebe, Joshua Lind, Divya Muthukumaran, Dan O'Keeffe, Mark
  Stillwell, David Goltzsche, David~M. Eyers, R{\"{u}}diger Kapitza, Peter~R.
  Pietzuch, and Christof Fetzer.
\newblock {SCONE}: Secure {Linux} containers with {Intel} {SGX}.
\newblock In \emph{Proceedings of the 12th USENIX Symposium on Operating
  Systems Design and Implementation}, pages 689--703, 2016.

\bibitem[Baumann et~al.(2014)Baumann, Peinado, and Hunt]{haven:osdi14}
Andrew Baumann, Marcus Peinado, and Galen Hunt.
\newblock Shielding applications from an untrusted cloud with {Haven}.
\newblock In \emph{Proceedings of the 11th USENIX Symposium on Operating
  Systems Design and Implementation}, pages 267--283, October 2014.
\newblock ISBN 978-1-931971-16-4.
\newblock URL
  \url{https://www.usenix.org/conference/osdi14/technical-sessions/presentation/baumann}.

\bibitem[Bornholt et~al.(2021)Bornholt, Joshi, Astrauskas, Cully, Kragl,
  Markle, Sauri, Schleit, Slatton, Tasiran, Van~Geffen, and
  Warfield]{lightweight_formal:sosp21}
James Bornholt, Rajeev Joshi, Vytautas Astrauskas, Brendan Cully, Bernhard
  Kragl, Seth Markle, Kyle Sauri, Drew Schleit, Grant Slatton, Serdar Tasiran,
  Jacob Van~Geffen, and Andrew Warfield.
\newblock \emph{Using Lightweight Formal Methods to Validate a Key-Value
  Storage Node in {Amazon S3}}, page 836–850.
\newblock Association for Computing Machinery, New York, NY, USA, 2021.
\newblock \doi{10.1145/3477132.3483540}.

\bibitem[che Tsai et~al.(2020)che Tsai, Son, Jain, McAvey, Popa, and
  Porter]{civet:usenixsec20}
Chia che Tsai, Jeongseok Son, Bhushan Jain, John McAvey, Raluca~Ada Popa, and
  Donald~E. Porter.
\newblock Civet: An efficient java partitioning framework for hardware
  enclaves.
\newblock In \emph{Proceedings of the 29th USENIX Security Symposium}, pages
  505--522. {USENIX} Association, August 2020.
\newblock ISBN 978-1-939133-17-5.
\newblock URL
  \url{https://www.usenix.org/conference/usenixsecurity20/presentation/tsai}.

\bibitem[Checkoway and Shacham(2013)]{iago:asplos13}
Stephen Checkoway and Hovav Shacham.
\newblock Iago attacks: Why the system call {API} is a bad untrusted {RPC}
  interface.
\newblock In \emph{Proceedings of the 18th International Conference on
  Architectural Support for Programming Languages and Operating Systems},
  ASPLOS '13, page 253–264, New York, NY, USA, 2013. ACM.
\newblock ISBN 9781450318709.
\newblock \doi{10.1145/2451116.2451145}.

\bibitem[Chen et~al.(2016)Chen, Ziegler, Chajed, Chlipala, Kaashoek, and
  Zeldovich]{fscq:atc16}
Haogang Chen, Daniel Ziegler, Tej Chajed, Adam Chlipala, M.~Frans Kaashoek, and
  Nickolai Zeldovich.
\newblock Using crash hoare logic for certifying the {FSCQ} file system.
\newblock In \emph{Proceedings of the 2016 USENIX Annual Technical Conference},
  Denver, CO, June 2016. {USENIX} Association.
\newblock URL
  \url{https://www.usenix.org/conference/atc16/technical-sessions/presentation/chen_haogang}.

\bibitem[Cloosters et~al.(2020)Cloosters, Rodler, and Davi]{teerex:usenixsec20}
Tobias Cloosters, Michael Rodler, and Lucas Davi.
\newblock {TeeRex}: Discovery and exploitation of memory corruption
  vulnerabilities in {SGX} enclaves.
\newblock In \emph{Proceedings of the 29th USENIX Security Symposium}, pages
  841--858. {USENIX} Association, August 2020.
\newblock ISBN 978-1-939133-17-5.

\bibitem[Cooper et~al.(2010)Cooper, Silberstein, Tam, Ramakrishnan, and
  Sears]{ycsb:socc10}
Brian~F. Cooper, Adam Silberstein, Erwin Tam, Raghu Ramakrishnan, and Russell
  Sears.
\newblock Benchmarking cloud serving systems with {YCSB}.
\newblock In \emph{Proceedings of the 1st ACM Symposium on Cloud Computing},
  pages 143--154, 2010.
\newblock ISBN 978-1-4503-0036-0.
\newblock \doi{10.1145/1807128.1807152}.

\bibitem[Costan and Devadas(2017)]{costan:sgxexplained:cryptoeprint16}
Victor Costan and Srinivas Devadas.
\newblock Intel {SGX} explained.
\newblock Cryptology ePrint Archive, Report 2016/086, February 2017.
\newblock \url{http://eprint.iacr.org/2016/086}.

\bibitem[Costan et~al.(2016)Costan, Lebedev, and Devadas]{sanctum:usenix_sec16}
Victor Costan, Ilia~A. Lebedev, and Srinivas Devadas.
\newblock Sanctum: Minimal hardware extensions for strong software isolation.
\newblock In \emph{Proceedings of the 25th USENIX Security Symposium}, pages
  857--874, 2016.

\bibitem[Cui et~al.(2021)Cui, Zhao, and Lie]{emilia:ndss21}
Rongzhen Cui, Lianying Zhao, and David Lie.
\newblock Emilia: Catching {Iago} in legacy code.
\newblock In \emph{Proceedings of the 28th Annual Network and Distributed
  System Security Symposium (NDSS)}, 2021.
\newblock URL
  \url{https://www.ndss-symposium.org/ndss-paper/emilia-catching-iago-in-legacy-code/}.

\bibitem[De~Moura and Bj\o{}rner(2008)]{z3}
Leonardo De~Moura and Nikolaj Bj\o{}rner.
\newblock Z3: An efficient {SMT} solver.
\newblock In \emph{Proceedings of the Theory and Practice of Software, 14th
  International Conference on Tools and Algorithms for the Construction and
  Analysis of Systems}, TACAS'08/ETAPS'08, page 337–340, Berlin, Heidelberg,
  2008. Springer-Verlag.
\newblock ISBN 3540787992.

\bibitem[Dierks and Rescorla(2008)]{tls:rfc}
Tim Dierks and Eric Rescorla.
\newblock The transport layer security ({TLS}) protocol version 1.2.
\newblock 2008.

\bibitem[Ena(2021)]{enarx:github}
\emph{Application deployment system enabling applications to run within Trusted
  Execution Environments ({TEEs})}.
\newblock Enarx, August 2021.
\newblock URL \url{https://github.com/enarx/}.

\bibitem[Feldman et~al.(2019)Feldman, Wilcox, Shoham, and Sagiv]{mypyvy:cav19}
Yotam M.~Y. Feldman, James~R. Wilcox, Sharon Shoham, and Mooly Sagiv.
\newblock Inferring inductive invariants from phase structures.
\newblock In \emph{Computer Aided Verification}, pages 405--425. Springer
  International Publishing, 2019.
\newblock ISBN 978-3-030-25543-5.

\bibitem[Ferraiuolo et~al.(2017)Ferraiuolo, Baumann, Hawblitzel, and
  Parno]{komodo:sosp17}
Andrew Ferraiuolo, Andrew Baumann, Chris Hawblitzel, and Bryan Parno.
\newblock Komodo: Using verification to disentangle secure-enclave hardware
  from software.
\newblock In \emph{Proceedings of the 26th ACM Symposium on Operating Systems
  Principles}, pages 287--305, 2017.
\newblock ISBN 978-1-4503-5085-3.
\newblock \doi{10.1145/3132747.3132782}.

\bibitem[FUSE()]{fuse}
FUSE.
\newblock \emph{{FUSE}. Filesystem in userspace}, 2021.
\newblock URL \url{https://github.com/libfuse/libfuse/}.

\bibitem[Gu et~al.(2016)Gu, Shao, Chen, Wu, Kim, Sj{\"o}berg, and
  Costanzo]{certikos:osdi16}
Ronghui Gu, Zhong Shao, Hao Chen, Xiongnan~(Newman) Wu, Jieung Kim, Vilhelm
  Sj{\"o}berg, and David Costanzo.
\newblock {CertiKOS}: An extensible architecture for building certified
  concurrent {OS} kernels.
\newblock In \emph{Proceedings of the 12th USENIX Symposium on Operating
  Systems Design and Implementation}, pages 653--669, Savannah, GA, November
  2016. USENIX Association.
\newblock ISBN 978-1-931971-33-1.
\newblock URL
  \url{https://www.usenix.org/conference/osdi16/technical-sessions/presentation/gu}.

\bibitem[Hanson and O'Dwyer(2021)]{uthash}
Troy~D. Hanson and Arthur O'Dwyer.
\newblock \emph{uthash: Hash Table for {C} Structures}, 2021.
\newblock URL \url{https://troydhanson.github.io/uthash/}.

\bibitem[Hetzelt et~al.(2021)Hetzelt, Radev, Buhren, Morbitzer, and
  Seifert]{via:arxiv}
Felicitas Hetzelt, Martin Radev, Robert Buhren, Mathias Morbitzer, and
  Jean{-}Pierre Seifert.
\newblock {VIA:} analyzing device interfaces of protected virtual machines.
\newblock \emph{CoRR}, abs/2109.10660, 2021.
\newblock URL \url{https://arxiv.org/abs/2109.10660}.

\bibitem[Hofmann et~al.(2013)Hofmann, Kim, Dunn, Lee, and
  Witchel]{inktag:asplos13}
Owen~S. Hofmann, Sangman Kim, Alan~M. Dunn, Michael~Z. Lee, and Emmett Witchel.
\newblock {InkTag}: Secure applications on an untrusted operating system.
\newblock In \emph{Proceedings of the 18th International Conference on
  Architectural Support for Programming Languages and Operating Systems},
  ASPLOS ’13, pages 265--278, 2013.
\newblock ISBN 9781450318709.
\newblock \doi{10.1145/2451116.2451146}.

\bibitem[Holzmann(1997)]{spin:ieeese}
G.J. Holzmann.
\newblock The model checker spin.
\newblock \emph{IEEE Transactions on Software Engineering}, 23\penalty0
  (5):\penalty0 279--295, 1997.
\newblock \doi{10.1109/32.588521}.

\bibitem[Int(2020)]{tdx:white_paper}
\emph{Intel Trust Domain Extensions}.
\newblock Intel, August 2020.
\newblock URL
  \url{https://software.intel.com/content/dam/develop/external/us/en/documents/tdx-whitepaper-final9-17.pdf}.

\bibitem[Int(2021{\natexlab{a}})]{intel:sgx_driver}
\emph{{SGX} {Linux} Driver}.
\newblock Intel, 2021{\natexlab{a}}.
\newblock URL \url{https://github.com/intel/linux-sgx-driver}.

\bibitem[Int(2021{\natexlab{b}})]{intel:sgxsdk}
\emph{{SGX} {SDK} for {Linux}}.
\newblock Intel, 2021{\natexlab{b}}.
\newblock URL \url{https://github.com/intel/linux-sgx}.

\bibitem[Jackson(2012)]{alloy}
Daniel Jackson.
\newblock \emph{Software Abstractions: Logic, Language, and Analysis}.
\newblock The MIT Press, 2012.
\newblock ISBN 0262017156.

\bibitem[Kaplan et~al.(2016)Kaplan, Powell, and Woller]{amd_sev:white_paper}
David Kaplan, Jeremy Powell, and Tom Woller.
\newblock Amd memory encryption.
\newblock \emph{White paper}, April 2016.
\newblock URL
  \url{https://developer.amd.com/wordpress/media/2013/12/AMD_Memory_Encryption_Whitepaper_v7-Public.pdf}.

\bibitem[Karnati(2018)]{ibm:sgx:blog}
Pratheek Karnati.
\newblock \emph{Data-in-use protection on {IBM} {Cloud} using {Intel} {SGX}}.
\newblock IBM, May 2018.
\newblock URL
  \url{https://www.ibm.com/cloud/blog/data-use-protection-ibm-cloud-using-intel-sgx}.

\bibitem[Khandaker et~al.(2020)Khandaker, Cheng, Wang, and
  Wei]{coin_attacks:asplos20}
Mustakimur~Rahman Khandaker, Yueqiang Cheng, Zhi Wang, and Tao Wei.
\newblock Coin attacks: On insecurity of enclave untrusted interfaces in {SGX}.
\newblock In \emph{Proceedings of the 25th International Conference on
  Architectural Support for Programming Languages and Operating Systems},
  ASPLOS '20, page 971–985, New York, NY, USA, 2020. ACM.
\newblock ISBN 9781450371025.
\newblock \doi{10.1145/3373376.3378486}.

\bibitem[Klein et~al.(2009)Klein, Elphinstone, Heiser, Andronick, Cock, Derrin,
  Elkaduwe, Engelhardt, Kolanski, Norrish, Sewell, Tuch, and
  Winwood]{sel4:sosp09}
Gerwin Klein, Kevin Elphinstone, Gernot Heiser, June Andronick, David Cock,
  Philip Derrin, Dhammika Elkaduwe, Kai Engelhardt, Rafal Kolanski, Michael
  Norrish, Thomas Sewell, Harvey Tuch, and Simon Winwood.
\newblock {SeL4}: Formal verification of an {OS} kernel.
\newblock In \emph{Proceedings of the 22nd ACM Symposium on Operating Systems
  Principles}, SOSP '09, page 207–220, New York, NY, USA, 2009. ACM.
\newblock ISBN 9781605587523.
\newblock \doi{10.1145/1629575.1629596}.

\bibitem[Kuvaiskii et~al.(2017)Kuvaiskii, Oleksenko, Arnautov, Trach, Bhatotia,
  Felber, and Fetzer]{sgxbounds:eurosys17}
Dmitrii Kuvaiskii, Oleksii Oleksenko, Sergei Arnautov, Bohdan Trach, Pramod
  Bhatotia, Pascal Felber, and Christof Fetzer.
\newblock {SGXBOUNDS:} memory safety for shielded execution.
\newblock In \emph{Proceedings of the {EuroSys} Conference}, pages 205--221,
  2017.
\newblock \doi{10.1145/3064176.3064192}.

\bibitem[Kwon et~al.(2016)Kwon, Dunn, Lee, Hofmann, Xu, and
  Witchel]{sego:asplos16}
Youngjin Kwon, Alan~M. Dunn, Michael~Z. Lee, Owen~S. Hofmann, Yuanzhong Xu, and
  Emmett Witchel.
\newblock Sego: Pervasive trusted metadata for efficiently verified untrusted
  system services.
\newblock In \emph{Proceedings of the 21th International Conference on
  Architectural Support for Programming Languages and Operating Systems},
  ASPLOS '16, page 277–290, New York, NY, USA, 2016. ACM.
\newblock ISBN 9781450340915.
\newblock \doi{10.1145/2872362.2872372}.

\bibitem[Larson(2002)]{ltp}
Paul Larson.
\newblock Testing linux with the linux test project.
\newblock In \emph{Ottawa Linux Symposium}, page 265, 2002.

\bibitem[Lee et~al.(2020)Lee, Kohlbrenner, Shinde, Asanovi\'{c}, and
  Song]{keystone:eurosys20}
Dayeol Lee, David Kohlbrenner, Shweta Shinde, Krste Asanovi\'{c}, and Dawn
  Song.
\newblock Keystone: An open framework for architecting trusted execution
  environments.
\newblock In \emph{Proceedings of the {EuroSys} Conference}, EuroSys '20. ACM,
  2020.
\newblock ISBN 9781450368827.
\newblock \doi{10.1145/3342195.3387532}.

\bibitem[Lee et~al.(2017)Lee, Jang, Jang, Kwak, Choi, Choi, Kim, Peinado, and
  Kang]{rop_sgx:usenixsec17}
Jaehyuk Lee, Jinsoo Jang, Yeongjin Jang, Nohyun Kwak, Yeseul Choi, Changho
  Choi, Taesoo Kim, Marcus Peinado, and Brent~ByungHoon Kang.
\newblock Hacking in darkness: Return-oriented programming against secure
  enclaves.
\newblock In \emph{Proceedings of the 26th USENIX Security Symposium}, pages
  523--539, Vancouver, BC, August 2017. {USENIX} Association.
\newblock ISBN 978-1-931971-40-9.
\newblock URL
  \url{https://www.usenix.org/conference/usenixsecurity17/technical-sessions/presentation/lee-jaehyuk}.

\bibitem[Lind et~al.(2017)Lind, Priebe, Muthukumaran, O{\textquoteright}Keeffe,
  Aublin, Kelbert, Reiher, Goltzsche, Eyers, Kapitza, Fetzer, and
  Pietzuch]{glamdring:usenixatc17}
Joshua Lind, Christian Priebe, Divya Muthukumaran, Dan
  O{\textquoteright}Keeffe, Pierre-Louis Aublin, Florian Kelbert, Tobias
  Reiher, David Goltzsche, David Eyers, R{\"u}diger Kapitza, Christof Fetzer,
  and Peter Pietzuch.
\newblock Glamdring: Automatic application partitioning for {Intel} {SGX}.
\newblock In \emph{Proceedings of the 2017 USENIX Annual Technical Conference},
  pages 285--298, Santa Clara, CA, 2017.
\newblock ISBN 978-1-931971-38-6.
\newblock URL
  \url{https://www.usenix.org/conference/atc17/technical-sessions/presentation/lind}.

\bibitem[Liu et~al.(2015)Liu, Harris, Maas, Hicks, Tiwari, and
  Shi]{ghostrider:asplos15}
Chang Liu, Austin Harris, Martin Maas, Michael Hicks, Mohit Tiwari, and Elaine
  Shi.
\newblock {GhostRider}: A hardware-software system for memory trace oblivious
  computation.
\newblock In \emph{Proceedings of the 20th International Conference on
  Architectural Support for Programming Languages and Operating Systems},
  ASPLOS ’15, pages 87--101, 2015.
\newblock ISBN 9781450328357.
\newblock \doi{10.1145/2694344.2694385}.

\bibitem[McKeen et~al.(2013)McKeen, Alexandrovich, Berenzon, Rozas, Shafi,
  Shanbhogue, and Savagaonkar]{sgx:hasp13:mckeen}
Frank McKeen, Ilya Alexandrovich, Alex Berenzon, Carlos~V Rozas, Hisham Shafi,
  Vedvyas Shanbhogue, and Uday~R Savagaonkar.
\newblock Innovative instructions and software model for isolated execution.
\newblock In \emph{2nd International Workshop on Hardware and Architectural
  Support for Security and Privacy}, 2013.

\bibitem[Memcached()]{memcached}
Memcached.
\newblock \emph{Memcached}, 2021.
\newblock URL \url{http://memcached.org/}.

\bibitem[Mishra and Polychronakis(2021)]{sgxpecial:eurosec21}
Shachee Mishra and Michalis Polychronakis.
\newblock {SGXPecial}: Specializing {SGX} interfaces against code reuse
  attacks.
\newblock In \emph{Proceedings of the 14th European Workshop on Systems
  Security}, EuroSec '21, page 48–54, New York, NY, USA, 2021. ACM.
\newblock ISBN 9781450383370.
\newblock \doi{10.1145/3447852.3458716}.

\bibitem[Nelson et~al.(2017)Nelson, Sigurbjarnarson, Zhang, Johnson, Bornholt,
  Torlak, and Wang]{hyperkernel:sosp17}
Luke Nelson, Helgi Sigurbjarnarson, Kaiyuan Zhang, Dylan Johnson, James
  Bornholt, Emina Torlak, and Xi~Wang.
\newblock Hyperkernel: Push-button verification of an {OS} kernel.
\newblock In \emph{Proceedings of the 26th ACM Symposium on Operating Systems
  Principles}, SOSP '17, page 252–269, New York, NY, USA, 2017. Association
  for Computing Machinery.
\newblock ISBN 9781450350853.
\newblock \doi{10.1145/3132747.3132748}.

\bibitem[Ongaro and Ousterhout(2014)]{raft:usenixatc14}
Diego Ongaro and John Ousterhout.
\newblock In search of an understandable consensus algorithm.
\newblock In \emph{Proceedings of the 2014 USENIX Annual Technical Conference},
  pages 305--319, Philadelphia, PA, June 2014. {USENIX} Association.
\newblock ISBN 978-1-931971-10-2.
\newblock URL
  \url{https://www.usenix.org/conference/atc14/technical-sessions/presentation/ongaro}.

\bibitem[Orenbach et~al.(2017)Orenbach, Lifshits, Minkin, and
  Silberstein]{eleos:eurosys17}
Meni Orenbach, Pavel Lifshits, Marina Minkin, and Mark Silberstein.
\newblock Eleos: Exitless {OS} services for {SGX} enclaves.
\newblock In \emph{Proceedings of the {EuroSys} Conference}, pages 238--253,
  2017.

\bibitem[Orenbach et~al.(2019)Orenbach, Michalevsky, Fetzer, and
  Silberstein]{cosmix:atc19}
Meni Orenbach, Yan Michalevsky, Christof Fetzer, and Mark Silberstein.
\newblock {CoSMIX}: A compiler-based system for secure memory instrumentation
  and execution in enclaves.
\newblock In \emph{Proceedings of the 2019 USENIX Annual Technical Conference},
  pages 555--570, July 2019.
\newblock ISBN 978-1-939133-03-8.

\bibitem[Orenbach et~al.(2020)Orenbach, Baumann, and
  Silberstein]{autarky:eurosys20}
Meni Orenbach, Andrew Baumann, and Mark Silberstein.
\newblock Autarky: Closing controlled channels with self-paging enclaves.
\newblock In \emph{Proceedings of the {EuroSys} Conference}, EuroSys '20, New
  York, NY, USA, 2020. ACM.
\newblock ISBN 9781450368827.
\newblock \doi{10.1145/3342195.3387541}.

\bibitem[Owens and Allen(2010)]{sqlite}
Mike Owens and Grant Allen.
\newblock \emph{{SQLite}}.
\newblock Springer, 2010.

\bibitem[Padon et~al.(2016)Padon, McMillan, Panda, Sagiv, and
  Shoham]{ivy:pldi16}
Oded Padon, Kenneth~L. McMillan, Aurojit Panda, Mooly Sagiv, and Sharon Shoham.
\newblock Ivy: Safety verification by interactive generalization.
\newblock In \emph{Proceedings of the 37th ACM SIGPLAN Conference on
  Programming Language Design and Implementation}, PLDI '16, page 614–630,
  New York, NY, USA, 2016. ACM.
\newblock ISBN 9781450342612.
\newblock \doi{10.1145/2908080.2908118}.

\bibitem[Priebe et~al.(2019)Priebe, Muthukumaran, Lind, Zhu, Cui, Sartakov, and
  Pietzuch]{sgxlkl}
Christian Priebe, Divya Muthukumaran, Joshua Lind, Huanzhou Zhu, Shujie Cui,
  Vasily~A. Sartakov, and Peter Pietzuch.
\newblock {SGX-LKL}: Securing the host {OS} interface for trusted execution.
\newblock \emph{CoRR}, abs/1908.11143, August 2019.
\newblock URL \url{https://arxiv.org/abs/1908.11143}.

\bibitem[Protzenko et~al.(2020)Protzenko, Parno, Fromherz, Hawblitzel,
  Polubelova, Bhargavan, Beurdouche, Choi, Delignat-Lavaud, Fournet, Kulatova,
  Ramananandro, Rastogi, Swamy, Wintersteiger, and
  Zanella-Beguelin]{evercrypt:ieeesp20}
Jonathan Protzenko, Bryan Parno, Aymeric Fromherz, Chris Hawblitzel, Marina
  Polubelova, Karthikeyan Bhargavan, Benjamin Beurdouche, Joonwon Choi, Antoine
  Delignat-Lavaud, Cédric Fournet, Natalia Kulatova, Tahina Ramananandro,
  Aseem Rastogi, Nikhil Swamy, Christoph~M. Wintersteiger, and Santiago
  Zanella-Beguelin.
\newblock {EverCrypt}: A fast, verified, cross-platform cryptographic provider.
\newblock In \emph{Proceedings of the IEEE Symposium on Security and Privacy},
  pages 983--1002, 2020.
\newblock \doi{10.1109/SP40000.2020.00114}.

\bibitem[Red(2021)]{redis}
\emph{Redis In-Memory Data Structure Store}.
\newblock Redis Labs, 2021.
\newblock URL \url{https://redis.io/}.

\bibitem[Reid(2017)]{spec_guard:oospla17}
Alastair Reid.
\newblock Who guards the guards? formal validation of the {ARM} {v8-M}
  architecture specification.
\newblock In \emph{Proceedings of the 2017 ACM Conference on Object-Oriented
  Programming, Systems Languages, and Applications}, OOSPLA '17, New York, NY,
  USA, 2017. ACM.
\newblock \doi{10.1145/3133912}.

\bibitem[Ridge et~al.(2015)Ridge, Sheets, Tuerk, Giugliano, Madhavapeddy, and
  Sewell]{sibylfs:sosp15}
Tom Ridge, David Sheets, Thomas Tuerk, Andrea Giugliano, Anil Madhavapeddy, and
  Peter Sewell.
\newblock {SibylFS}: Formal specification and oracle-based testing for {POSIX}
  and real-world file systems.
\newblock In \emph{Proceedings of the 25th ACM Symposium on Operating Systems
  Principles}, SOSP '15, page 38–53, New York, NY, USA, 2015. ACM.
\newblock ISBN 9781450338349.
\newblock \doi{10.1145/2815400.2815411}.

\bibitem[Russinovich(2017)]{markruss:azureconfidentialcomputing:2017}
Mark Russinovich.
\newblock Introducing {Azure} confidential computing, September 2017.
\newblock URL
  \url{https://azure.microsoft.com/blog/introducing-azure-confidential-computing/}.
\newblock Accessed: 2019-07-30.

\bibitem[Serebryany et~al.(2012)Serebryany, Bruening, Potapenko, and
  Vyukov]{address_sanitizer:atc12}
Konstantin Serebryany, Derek Bruening, Alexander Potapenko, and Dmitriy Vyukov.
\newblock {AddressSanitizer}: A fast address sanity checker.
\newblock In \emph{Proceedings of the 2012 USENIX Annual Technical Conference},
  pages 309--318, Boston, MA, June 2012. {USENIX} Association.
\newblock ISBN 978-931971-93-5.
\newblock URL
  \url{https://www.usenix.org/conference/atc12/technical-sessions/presentation/serebryany}.

\bibitem[Shen et~al.(2020)Shen, Tian, Chen, Chen, Wang, Xu, Xia, and
  Yan]{occlum:asplos20}
Youren Shen, Hongliang Tian, Yu~Chen, Kang Chen, Runji Wang, Yi~Xu, Yubin Xia,
  and Shoumeng Yan.
\newblock Occlum: Secure and efficient multitasking inside a single enclave of
  {Intel SGX}.
\newblock In \emph{Proceedings of the 25th International Conference on
  Architectural Support for Programming Languages and Operating Systems},
  ASPLOS '20, page 955–970, New York, NY, USA, 2020. ACM.
\newblock ISBN 9781450371025.
\newblock \doi{10.1145/3373376.3378469}.

\bibitem[Shinde et~al.(2017)Shinde, Tien, Tople, and Saxena]{panoply:ndss17}
Shweta Shinde, Dat~Le Tien, Shruti Tople, and Prateek Saxena.
\newblock Panoply: Low-{TCB} {Linux} applications with {SGX} enclaves.
\newblock In \emph{Proceedings of the 24th Annual Network and Distributed
  System Security Symposium (NDSS)}, February 2017.

\bibitem[Shinde et~al.(2020)Shinde, Wang, Yuan, Hobor, Roychoudhury, and
  Saxena]{besfs:usenixsec20}
Shweta Shinde, Shengyi Wang, Pinghai Yuan, Aquinas Hobor, Abhik Roychoudhury,
  and Prateek Saxena.
\newblock {BesFS}: A {POSIX} filesystem for enclaves with a mechanized safety
  proof.
\newblock In \emph{Proceedings of the 29th USENIX Security Symposium}, pages
  523--540. {USENIX} Association, August 2020.
\newblock ISBN 978-1-939133-17-5.

\bibitem[Sigurbjarnarson et~al.(2016)Sigurbjarnarson, Bornholt, Torlak, and
  Wang]{yggdrasil:osdi16}
Helgi Sigurbjarnarson, James Bornholt, Emina Torlak, and Xi~Wang.
\newblock {Push-Button} verification of file systems via crash refinement.
\newblock In \emph{Proceedings of the 12th USENIX Symposium on Operating
  Systems Design and Implementation}, pages 1--16, Savannah, GA, November 2016.
  {USENIX} Association.
\newblock ISBN 978-1-931971-33-1.
\newblock URL
  \url{https://www.usenix.org/conference/osdi16/technical-sessions/presentation/sigurbjarnarson}.

\bibitem[Szekeres et~al.(2013)Szekeres, Payer, Wei, and
  Song]{memory_sok:ieeesp13}
László Szekeres, Mathias Payer, Tao Wei, and Dawn Song.
\newblock Sok: Eternal war in memory.
\newblock In \emph{Proceedings of the IEEE Symposium on Security and Privacy},
  pages 48--62, 2013.
\newblock \doi{10.1109/SP.2013.13}.

\bibitem[Tsai et~al.(2017)Tsai, Porter, and Vij]{graphenesgx:usenix17}
Chia-Che Tsai, Donald~E. Porter, and Mona Vij.
\newblock {Graphene-SGX}: A practical library {OS} for unmodified applications
  on {SGX}.
\newblock In \emph{Proceedings of the 2017 USENIX Annual Technical Conference},
  pages 645--658, 2017.
\newblock ISBN 978-1-931971-38-6.
\newblock URL
  \url{https://www.usenix.org/conference/atc17/technical-sessions/presentation/tsai}.

\bibitem[Van~Bulck et~al.(2019)Van~Bulck, Oswald, Marin, Aldoseri, Garcia, and
  Piessens]{tale_of_two_worlds:ccs19}
Jo~Van~Bulck, David Oswald, Eduard Marin, Abdulla Aldoseri, Flavio~D. Garcia,
  and Frank Piessens.
\newblock A tale of two worlds: Assessing the vulnerability of enclave
  shielding runtimes.
\newblock In \emph{Proceedings of the 26th ACM Conference on Computer and
  Communications Security}, CCS '19, page 1741–1758, New York, NY, USA, 2019.
  ACM.
\newblock ISBN 9781450367479.
\newblock \doi{10.1145/3319535.3363206}.

\bibitem[van Ginkel et~al.(2017)van Ginkel, Strackx, and Piessens]{sep_logic}
Neline van Ginkel, Raoul Strackx, and Frank Piessens.
\newblock Automatically generating secure wrappers for {SGX} enclaves from
  separation logic specifications.
\newblock In \emph{Programming Languages and Systems}, pages 105--123. Springer
  International Publishing, 2017.
\newblock ISBN 978-3-319-71237-6.

\bibitem[Weichbrodt et~al.(2016)Weichbrodt, Kurmus, Pietzuch, and
  Kapitza]{asyncshock:esorics16}
Nico Weichbrodt, Anil Kurmus, Peter Pietzuch, and R{\"u}diger Kapitza.
\newblock {AsyncShock}: Exploiting synchronisation bugs in {Intel} {SGX}
  enclaves.
\newblock In \emph{European Symposium on Research in Computer Security
  (ESORICS)}, pages 440--457. Springer International Publishing, 2016.

\bibitem[Weinhold and H\"{a}rtig(2008)]{vpfs:eurosys08}
Carsten Weinhold and Hermann H\"{a}rtig.
\newblock {VPFS}: Building a virtual private file system with a small trusted
  computing base.
\newblock In \emph{Proceedings of the {EuroSys} Conference}, page 81–93, New
  York, NY, USA, 2008. Association for Computing Machinery.
\newblock ISBN 9781605580135.
\newblock \doi{10.1145/1352592.1352602}.

\end{thebibliography}

\end{document}